\documentclass[a4paper, 10pt,reqno]{amsart}
\usepackage{amssymb}
\usepackage{amsmath}
\usepackage{amsthm}
\usepackage{latexsym}
\usepackage{mathrsfs}
\usepackage{bm}
\usepackage{eucal}
\usepackage{tipa}
\usepackage{slashed}
\usepackage{mathtools}
\usepackage{enumitem}

\usepackage{tikz-cd}

\usepackage{color}

\usepackage[boxsize=0.5em,aligntableaux=top]{ytableau}
\newcommand{\yd}{\ydiagram}

\usepackage[numbers]{natbib}

\usepackage{hyperref}

\usepackage{orcidlink}

\hypersetup{
    colorlinks=true,
    linkcolor=blue,
    filecolor=blue,      
    urlcolor=blue,
}

\newcounter{mnotecount}[section]

\allowdisplaybreaks[2]

\theoremstyle{plain}
\newtheorem{theorem}{Theorem}
\newtheorem{proposition}[theorem]{Proposition}
\newtheorem{lemma}[theorem]{Lemma}

\newtheorem{definition}[theorem]{Definition}
\newtheorem{remark}[theorem]{Remark}

\setlength{\textwidth}{148mm}		
\setlength{\textheight}{237mm}		
\setlength{\topmargin}{-5mm}		
\setlength{\oddsidemargin}{5mm} 
\setlength{\evensidemargin}{5mm}	
\setlength{\marginparwidth}{25mm}

\def\sDiv{\mathscr{D}}
\def\sCurl{\mathscr{C}}
\def\sCurlDagger{\mathscr{C}^\dagger}
\def\sTwist{\mathscr{T}}

\def\ImA{\Im\hat{\mathcal{A}}}

\def\SymSpin{\mathcal{S}}
\def\Vec{V}
\def\linWeyl{\mathcal{W}}
\def\Hop{\mathsf{H}}
\def\Cop{\mathsf{C}}
\def\Dop{\mathsf{D}}

\def\Kop{\mathsf{K}}
\def\Lop{\mathsf{L}}
\def\Pop{\mathsf{P}}
\def\Qop{\mathsf{Q}}
\def\Fop{\mathsf{F}}
\def\id{\mathrm{id}}
\def\defeq{\coloneqq}

\def\FrameCon{\gamma} 
\def\FlatCon{\Upsilon} 

\def\InvSymb{\mathbb{I}}

\newcommand{\Mcal}{\mathcal{M}}

\numberwithin{equation}{section}

\tikzcdset{row sep/normal=2cm}
\tikzcdset{column sep/normal=2cm}
\tikzset{ampersand replacement=\&}

  \newcommand{\del}{\partial}

  \newcommand{\Lie}{\mathcal{L}}

\renewcommand{\d}{\mathrm{d}}

  \newcommand{\Secs}{\Gamma}
  \newcommand{\DD}{\mathbb{D}}

\title{Compatibility complex for black hole spacetimes}

\author[S. Aksteiner]{Steffen Aksteiner \orcidlink{0000-0002-0009-4292}}
\email{steffen.aksteiner@aei.mpg.de}
\address{Albert Einstein Institute, Am M\"uhlenberg 1, D-14476 Potsdam, Germany }

\author[L. Andersson]{Lars Andersson \orcidlink{0000-0002-6364-7384}}

\email{lars.andersson@aei.mpg.de}
\address{Albert Einstein Institute, Am M\"uhlenberg 1, D-14476 Potsdam, Germany }

\author[T. B\"{a}ckdahl]{Thomas B\"{a}ckdahl \orcidlink{0000-0003-3240-2445}}

\email{thomas.backdahl@chalmers.se}

\address{Mathematical Sciences, Chalmers University of Technology and University of Gothenburg, SE-412~96 Gothenburg, Sweden}

\address{Albert Einstein Institute, Am M\"uhlenberg 1, D-14476 Potsdam, Germany }

\author[I. Khavkine]{Igor Khavkine \orcidlink{0000-0003-4255-6579}}

\email{khavkine@math.cas.cz}
\address{Institute of Mathematics of the Czech Academy of Sciences, \v{Z}itn{\'a} 25, 115 67 Praha 1, Czech Republic}

\author[B. Whiting]{Bernard Whiting \orcidlink{0000-0002-8501-8669}}

\email{bernard@phys.ufl.edu}
\address{Department of Physics, University of Florida, 2001 Museum Road, Gainesville, FL 32611-8440, USA}

\begin{document}


\begin{abstract}
The set of local gauge invariant quantities for linearized gravity on the Kerr spacetime presented by two of the authors (S.A, T.B.) in \cite{ab-kerr} is shown to be complete. In particular, any gauge invariant quantity for linearized gravity on Kerr that is local and of finite order in derivatives can be expressed in terms of these gauge invariants and derivatives thereof. The proof is carried out by constructing a complete compatibility complex for the Killing operator, and demonstrating the equivalence of the gauge invariants from \cite{ab-kerr} with the first compatibility operator from that complex.
\end{abstract}

\maketitle

\section{Introduction}
It is a fundamental principle of general relativity that physically measurable quantities are gauge invariant, in the sense that physical phenomena should not depend on the coordinates used to describe them. In the modelling of gravitational radiation emission from the binary inspiral and merger of two compact objects, such as black holes and neutron stars, one of the most important outcomes is the waveform extracted near infinity, which is what can be detected in gravitational wave observatories. Thus it is imperative to represent such a waveform in terms of gauge invariant quantities.
Even in fully numerical approaches to waveform computation, a waveform can typically be described as a perturbation away from some asymptotically flat (reference) background spacetime.  Thus, gauge invariant asymptotic waveforms can actually be obtained by analyzing perturbations. 

In this paper, we investigate local gauge invariants for first order perturbations of the Kerr spacetime background, and describe a set which is complete in a sense that we make clear. Ours is not the first attempt to describe perturbative gauge invariants on black hole spacetimes. See \cite{2017CQGra..34q4001T,swaab,ab-kerr,kh-compat,fhk,1999GReGr..31.1855J,martel-poisson,mobpm-kerr} and references therein for earlier work. 
See also \cite{stewart-walker} for a discussion of coordinate and tetrad gauge dependence. 
Here we shall 
 rely on the methods introduced in \cite{kh-compat}, applied to the Kerr geometry.
Proofs of completeness of a set of gauge invariants are a relatively recent development. They have been given for a small number of other spacetime reference backgrounds including the spherically symmetric Schwarzschild spacetime and the conformally flat Friedmann-Robertson-Walker spacetimes \cite{fhh, fhk, kh-compat}.  
Nevertheless, this paper is the first to fully demonstrate completeness for a set of gauge invariants for the Kerr spacetime. See  \cite{pommaret1,pommaret2} for work on related problems.

In order to solve the problem of classifying all local gauge invariants for linearized gravity on the Kerr spacetime, it has been necessary to apply techniques and results that are not in common use in general relativity. Although the construction of the gauge invariants uses methods that have previously been applied in the literature on black hole perturbations, cf.~\cite{ab-kerr} and Remark~\ref{rmk:intuition} below for further explanation, a proof of their completeness requires the application of techniques and results from the theory of differential complexes.

The analysis of gauge invariant quantities is particularly important from the point of view of applications in gravitational wave analysis, partly because most compact binary mergers result in a Kerr black hole, and partly because, in the case of a binary with an extreme mass ratio, it is not yet known how to express the waveform representing gravitational wave emission. Current efforts to tackle this problem require evaluating the (covariant but gauge dependent \cite{Barack:2001ph,Pound:2013faa}) 
gravitational self-force to second order in the mass ratio, and it is anticipated that the gauge invariants introduced in \cite{ab-kerr} and shown here to be complete will prove useful in that evaluation (just as the mode-decoupled gauge invariants of \cite{2017CQGra..34q4001T} have proved useful at first order \cite{Thompson:2018lgb}). We mention that the previously considered set of gauge invariants for linearized gravity on the Kerr spacetime, presented in \cite{ab-kerr}, includes the set of gauge invariants in \cite{mobpm-kerr} as a strict subset. 

\subsection*{Motivation and background.}
Several problems have served as major motivations for the development of black hole perturbation theory during the last half-century. Among these are the self-force problem mentioned above and the closely related black hole stability problem. The Teukolsky scalars, which are two of the gauge invariants for linearized gravity on the Kerr background under consideration here, play a central role in the recent proof of linear stability of the Kerr black hole \cite{2019arXiv190303859A}.  

Let $(\Mcal, g_{ab})$ be a member of the Kerr family of stationary, rotating vacuum spacetimes and let $\Kop$ denote the Killing operator, 
\begin{align} 
(\Kop v)_{ab} = \Lie_v g_{ab} = \nabla_a v_b + \nabla_b
v_a.
\end{align}
Due to the covariant nature of the Einstein equations, given a solution $h_{ab}$ of  the linearized Einstein equations on $(\Mcal, g_{ab})$, the perturbation 
\begin{equation} \label{eq:hhprimtransf}
h'_{ab} = h_{ab} + (\Kop v)_{ab}
\end{equation} 
is also a solution. Any two metric perturbations are equivalent up to gauge when they differ by the
image of the Killing operator, and in particular represent  physically equivalent states. The linearized metric $h_{ab}$ is highly gauge dependent. Therefore, in order to extract the physical information of $h_{ab}$, it is necessary to either introduce gauge conditions (which introduces further ambiguity),
or to find gauge invariant quantities. 

By a gauge invariant quantity, we here mean a compatibility operator for the Killing operator, i.e. a  covariant linear differential operator $\Qop$ on symmetric 2-tensors \eqref{eq:hhprimtransf} satisfying
\begin{align} 
\Qop h' = \Qop h.
\end{align}
Equivalently in operator form it can be formulated as
\begin{align} 
\Qop \circ \Kop = 0 .
\end{align}
In general, we consider compatibility operators taking values in suitable vector bundles, and therefore we may view the components of $\Qop$ as a list of scalar compatibility operators $\Qop = (\Qop_1, \dots, \Qop_N)$. 

A compatibility operator $\Qop$ is complete if it contains the complete information of $h_{ab}$ modulo gauge, i.e. if 
\begin{align} \label{eq:comp1a}
\Qop h = 0
\end{align} 
only if $h_{ab}$ is locally pure gauge, i.e. if $h_{ab}$ is locally in the image of the Killing operator, 
\begin{align} \label{eq:comp1b}
h_{ab} = (\Kop v)_{ab}
\end{align} 
for some vector field $v$. 

It is a remarkable fact that a complete set of gauge invariants on the
Kerr background not only exists, but is finite and can be
given by explicit formulas of manageable complexity. In fact, a candidate
complete list was recently announced for the first time in \cite{ab-kerr}. In
this work, we prove its completeness using the
methods of \cite{kh-compat}. More precisely, we construct a compatibility operator $\Kop_1$ for $\Kop$ such that any other compatibility operator
$\Lop$ factors through it, i.e. there exists an ${\Lop}'$ such that
\begin{align} \label{eq:comp2}
\Lop = {\Lop}'\circ \Kop_1 .
\end{align}
The two notions of completeness, from~\eqref{eq:comp1a}, \eqref{eq:comp1b} and from~\eqref{eq:comp2}, are equivalent for $\Kop$ the Killing operator on Kerr since it falls into the \emph{regular finite type} class, explained in \cite{kh-compat}. 
In the language of the theory of
over-determined partial differential equations, $\Kop_1$ is a
complete compatibility operator for the Killing operator $\Kop_0 = \Kop$, cf. Definition \ref{def:compat} below. 
In addition, we show that the components
of $\Kop_1$ themselves factor through the gauge invariants (i.e. the compatibility operator) $\tilde \Kop_1$ introduced in \cite{ab-kerr},
thus proving that $\tilde \Kop_1$ is also complete. 
In the course of the proof, we construct a sequence  
of differential operators $\Kop_l$, which successively compose to
$\Kop_{l}\circ \Kop_{l-1} = 0$, hence a \emph{complex} of differential
operators. In fact, each $\Kop_l$ will be a complete compatibility
operator for $\Kop_{l-1}$, $l>0$, that is, a \emph{compatibility}
complex for the Killing operator $\Kop$, and moreover a \emph{full} one,
meaning that it cannot be extended (though the operators degenerate to
zero after finitely many steps).

An analogous construction arises when, instead of considering the linearized 
Einstein equations, we consider Maxwell's equations, where it is well known 
that pure gauge modes are given by the exterior derivative
$A_a = (\d \phi)_a$. So the
analog of $\Kop_0$ is the exterior derivative $\d$ on $0$-forms. Then
the role of $\Kop_1$ is played by the field strength tensor $F_{ab} =
(\d A)_{ab}$, namely the exterior derivative $\d$ on $1$-forms. This
sequence extends to the de~Rham complex of exterior derivatives $\d$ on
forms of higher degrees as the full compatibility complex of the
exterior derivative $\d$ on $0$-forms. This compatibility complex is the
same on both flat and curved backgrounds, since the exterior derivative
$\d\phi$ depends only on the differential structure. This is no longer
the case for the Killing operator $\Kop$, which strongly depends on the
background metric and the compatibility complex has to be computed anew
for each background, making its construction much more challenging.

As an example of the usefulness of the higher compatibility operators
$\Kop_{l>1}$, consider again the de~Rham complex.
Just as the exterior derivatives $\d$ on higher degree forms appear in the
definition of the Hodge wave operators $\square = \delta \d + \d \delta$ on differential forms, which satisfy the
convenient identities $\square \d = \d \square$, we expect the $\Kop_1$
and $\Kop_2$ operators to appear in a similar formula for a suitable
wave-like equation satisfied by the components of $\Kop_1[h]$ when $h$
solves the linearized Einstein equations. Extending
such wave-like operators to higher nodes of the compatibility complex of
$\Kop$ gives this complex a structure reminiscent of
Hodge theory in Riemannian geometry. This Hodge-like structure will be
considered in future works, where it could have applications to the
reconstruction problem, i.e. the problem of constructing the solution to the inhomogeneous equation $\Kop_1[h] = f$, and to the computation of the cohomologies
$H^*(\Kop_l)$ with causally restricted supports or regularity
properties. In the context of constant curvature backgrounds (e.g.,\
de~Sitter spacetime), these applications have been illustrated in
\cite{kh-calabi,kh-causcohom}.

\subsection*{Remarks on methodology.}
While we have already explained the motivation for and the importance of
our results, it remains to justify our methods, which unfortunately
carry two types of technical complications. The first is the
introduction of notions from homological algebra and the formal study of
overdetermined PDEs, which are not commonly known in the mathematical
relativity literature. The second is the complexity of the formulas
needed to present our main result. The justification is simple: despite
its technical complexity, our method is the simplest one known to us to
demonstrate our main results on the completeness of gauge invariants.

There are alternative approaches to prove such completeness. One
approach is in fact an algorithmic way to construct a complete
compatibility operator $\Kop_1$ and is well-known in the literature on
overdetermined PDEs~\cite{spencer}. It has even been implemented in
computer algebra~\cite{janet}. However, its use requires all
differential operators to be explicitly expressed in coordinates and in
components, which unfortunately highly obfuscates any geometric
structure in $\Kop_1$ and, more often than not, results in extremely
long expressions of doubtful utility. A semi-algorithmic version of this
approach has been pursued in the recent papers~\cite{pommaret1,
pommaret2, pommaret3} and has yet to arrive at a full expression for a
compatibility operator $\Kop_1$, let alone one as compactly expressed as
in~\cite{ab-kerr}. Another drawback of this approach is that the
completeness of $\Kop_1$ is proved by virtue of its algorithmic
construction. Anyone interested in verifying the completeness for
themselves is forced to rerun the algorithm, which is not always
practical. The advantage of the approach in~\cite{kh-compat}, which is
related but alternative, and on which the proofs in this paper are
based, is that it allows the freedom to avoid explicit component
computations, while reducing the proof of completeness to the existence
of a clearly structured set of identities, whose structure is motivated by
homological algebra. While the presentation of these identities in
section~\ref{sec:kerr} may be daunting, its complexity is necessary, as
the spinor calculus of~\cite{ABB:symop:2014CQGra..31m5015A,
2016arXiv160106084A} actually provides the most compact way known to us
of expressing them.

Another potential approach to the construction of a complete
compatibility operator $\Kop_1$ relies on representation theory.
So-called \emph{BGG complexes}~\cite{css-bgg, cd-bgg} may be constructed
and proven to be complete compatibility complexes in a purely
representation-theoretic way on spacetimes with a transitive isometry
group. An example where this method is successful for the Killing
operator $\Kop_0$ is the de~Sitter background~\cite{css-bgg, cd-bgg,
kh-calabi}. The isometry orbits on $4$-dimensional Kerr are only
$2$-dimensional, meaning that the symmetry is definitely not transitive.
Unfortunately, in such a case, the BGG construction gives a sequence of
operators $\Kop_i$ that fail to compose to zero, that is, $\Kop_{i+1}
\circ \Kop_i \ne 0$ in general, i.e., they fail to form a complex. In addition, there
is no known systematic way of correcting this sequence to a true
compatibility complex that is different from the algorithmic approach
described in the previous paragraph. The BGG construction is an
interesting starting point, but it is currently an open question whether
it can be used to construct even a complete $\Kop_1$ while respecting
the geometric structure of the Kerr background.

\subsection*{Overview of this paper}  Section \ref{sec:compat} introduces the basic notions in the theory of compatibility complexes, and states some basic facts which shall be needed. In section \ref{sec:prel} we introduce some notations and definitions which shall be used for the proofs, including spinor calculus and characterizations of the Kerr spacetime. Section \ref{sec:kerr} contains the statement and proof of our main result, and section \ref{sec:equiv} contains a discussion of the relationship between the gauge invariants constructed in section \ref{sec:kerr} and those introduced in \cite{ab-kerr}. The longer equations of these relations are given in appendix~\ref{sec:AppC2tildeComps}. A discussion of the differences in the number of invariants and their differential order, between different sets of invariants, is given in section \ref{sec:counting}. Finally, a brief discussion of the significance of these results, applications, and future directions, is given in section \ref{sec:discussion}.

\section{Compatibility operators} \label{sec:compat}
We briefly recall here some definitions and results
from~\cite{kh-compat}, which will be referred to in
Section~\ref{sec:kerr}, where our main results will be presented.

Whenever speaking of differential operators, we will specifically mean a
linear differential operator with smooth coefficients acting on smooth
functions. More precisely, we will consider differential operators that
map between sections of vector bundles, say $V_1 \to M$ and $V_2 \to M$,
on some fixed manifold $M$, $ \Kop \colon \Secs(V_1) \to \Secs(V_2)$. The
source and target bundle of a differential operator, $V_1\to M$ and
$V_2\to M$ respectively in the last example, will be considered as part
of its definition and will most often be omitted from the notation. We
will denote the composition of two differential operators $\Lop$ and $\Kop$ by
$\Lop\circ \Kop$, or simply by $\Lop \Kop$, if no confusion is possible. A
\emph{local section} of a vector bundle $V\to M$ is a section of the
restriction bundle $V|_U \to U$ for some open $U \subset M$. A local
section $v$ that solves the differential equation $\Kop v = 0$ on its
domain of definition is a \emph{local solution}.

\begin{definition} \label{def:compat}
Given a differential operator $\Kop$, any composable differential operator
$\Lop$ such that $\Lop \circ \Kop = 0$ is a \emph{compatibility operator} for $\Kop$.
If $\Kop_1$ is a compatibility operator for $\Kop$, it is called \emph{complete}  when any other compatibility operator $\Lop$ can be
factored through $\Lop = \Lop'\circ \Kop_1$ for some differential operator $\Lop'$. A
complex of differential operators $\Kop_l$, $l=0,1,\ldots$ is called a
\emph{compatibility complex} for $\Kop$ when $\Kop_0 = \Kop$ and, for each $l\ge 1$,
$\Kop_l$ is a complete compatibility operator for $\Kop_{l-1}$.
\end{definition}

Logically speaking, what we have defined should be called a
\emph{complete compatibility complex} (a sequence of compatibility
operators, where each is complete), but we follow standard usage where
the adjective \emph{complete} is implied~\cite[Def.1.2.4]{tarkhanov}. It
seems the possible distinction of meanings was not important in the
original literature on overdetermined PDEs.

\begin{definition} \label{def:loc-exact}
Given a (possibly infinite) complex of differential operators $\Kop_l$,
$l=l_{\min}, \ldots, l_{\max}$, we say that it is
\emph{locally exact} at a point $x$ and node $l$ when, for every pair
$(f_l,U)$ of an open neighborhood $U \ni x$ and a smooth section $f_l$
defined on $U$ such that $\Kop_l f_l = 0$, there exists a smooth section
$g_{l-1}$ defined on a possibly smaller open neighborhood $V \ni x$ such
that $f_l = \Kop_{l-1} g_{l-1}$. \emph{Locally exact} without specifying
a point $x$ means locally exact at every $x$, and without specifying $l$
means locally exact for every $l_{\min} < l \le l_{\max}$.
\end{definition}

Our convention is that exactness does not apply at a finite end node of
a complex (initial or final, if they exist). That way, a truncated exact
complex remains exact. Of course, a finite complex can always be
extended by zero maps to any desired length, but that might change its
exactness properties.

\begin{definition} \label{def:homalg}
A (possibly infinite) composable sequence $\Kop_l$ of linear maps,
$l=l_{\min}, \ldots, l_{\max}$, such that
$\Kop_{l+1} \circ \Kop_l = 0$  for each allowed $l$, is called a \emph{(cochain)
complex}. Given complexes $\Kop_l$ and $\Kop'_l$ a sequence $\Cop_l$ of linear
maps, as in the diagram
\begin{equation}
\begin{tikzcd}[column sep=large,row sep=large]
	\cdots \ar{r} \&
	\bullet \ar{r}{\Kop_{l-1}} \ar{d}{\Cop_{l-1}} \&
	\bullet \ar{r}{\Kop_l} \ar{d}{\Cop_l} \&
	\bullet \ar{r}{\Kop_{l+1}} \ar{d}{\Cop_{l+1}} \&
	\bullet \ar{r} \ar{d}{\Cop_{l+2}} \&
	\cdots
	\\
	\cdots \ar{r} \&
	\bullet \ar[swap]{r}{\Kop'_{l-1}} \&
	\bullet \ar[swap]{r}{\Kop'_l} \&
	\bullet \ar[swap]{r}{\Kop'_{l+1}} \&
	\bullet \ar{r} \&
	\cdots
\end{tikzcd} ,
\end{equation}
such that its squares commute, that is $\Kop'_l \circ \Cop_l = \Cop_{l+1} \circ
\Kop_l$ for each allowed $l$, is called a \emph{cochain map} or a \emph{morphism}
between complexes. A \emph{homotopy} between complexes $\Kop_l$ and $\Kop'_l$
(which could also be the same complex, $\Kop_l = \Kop'_l$) is a sequence of
morphisms, as the dashed arrows in the diagram
\begin{equation}
\begin{tikzcd}[column sep=large,row sep=large]
	\cdots \ar{r} \&
	\bullet \ar{r}{\Kop_{l-1}} \ar{d}{\Cop_{l-1}} \&
	\bullet \ar{r}{\Kop_l} \ar{d}{\Cop_l} \ar[dashed]{dl}{\Hop_{l-1}} \&
	\bullet \ar{r}{\Kop_{l+1}} \ar{d}{\Cop_{l+1}} \ar[dashed]{dl}{\Hop_l} \&
	\bullet \ar{r} \ar{d}{\Cop_{l+2}} \ar[dashed]{dl}{\Hop_{l+1}}\&
	\cdots
	\\
	\cdots \ar{r} \&
	\bullet \ar[swap]{r}{\Kop'_{l-1}} \&
	\bullet \ar[swap]{r}{\Kop'_l} \&
	\bullet \ar[swap]{r}{\Kop'_{l+1}} \&
	\bullet \ar{r} \&
	\cdots
\end{tikzcd} .
\end{equation}
The sequence of maps $\Cop_l = \Kop'_{l-1}\circ \Hop_{l-1} + \Hop_l\circ \Kop_l$ is
said to be a \emph{morphism induced by} the homotopy $\Hop_l$. An
\emph{equivalence up to homotopy} between complexes $\Kop_l$ and $\Kop'_l$ is
a pair of morphisms $\Cop_l$ and $\Dop_l$ between them, as in the diagram
\begin{equation}
\begin{tikzcd}[column sep=large,row sep=4.5em]
	\ar[loop left]{}{\tilde{\Hop}_{l_{\min}-1}}
	\bullet \ar{r}{\Kop_{l_{\min}}}
		\ar[swap,shift right]{d}{\Cop_{l_{\min}}} \&
	\ar[r,phantom,"\cdots"]
		\ar[dashed,bend left]{l}{\Hop_{l_{\min}}} \&
	\bullet \ar{r}{\Kop_l}
		\ar[swap, shift right]{d}{\Cop_l} \&
	\bullet \ar[r,phantom,"\cdots"]
		\ar[swap,shift right]{d}{\Cop_{l+1}}
		\ar[dashed,bend left]{l}{\Hop_l} \&
	\ar{r}{\Kop_{l_{\max}}} \&
	\bullet \ar[swap,shift right]{d}{\Cop_{l_{\max}+1}}
		\ar[dashed,bend left]{l}{\Hop_{l_{\max}}}
		\ar[loop right]{}{\tilde{\Hop}_{l_{\max}+1}}
	\\
	\ar[loop left]{}{\tilde{\Hop}'_{l_{\min}-1}}
	\bullet \ar[swap]{r}{\Kop'_{l_{\min}}}
		\ar[swap,shift right]{u}{\Dop_{l_{\min}}} \&
	\ar[r,phantom,"\cdots"]
		\ar[swap,dashed,bend right]{l}{\Hop'_{l_{\min}}} \&
	\bullet \ar[swap]{r}{\Kop'_l}
		\ar[swap,shift right]{u}{\Dop_l} \&
	\bullet \ar[r,phantom,"\cdots"]
		\ar[swap,shift right]{u}{\Dop_{l+1}}
		\ar[swap,dashed,bend right]{l}{\Hop'_l} \&
	\ar[swap]{r}{\Kop'_{l_{\max}}} \&
	\bullet \ar[swap,shift right]{u}{\Dop_{l_{\max}+1}}
		\ar[swap,dashed,bend right]{l}{\Hop'_{l_{\max}}}
	\ar[loop right]{}{\tilde{\Hop}'_{l_{\max}+1}}
\end{tikzcd} ,
\end{equation}
such that $\Cop_l$ and $\Dop_l$ are mutual inverses up to homotopy ($\Hop_l$ and
$\Hop'_l$), that is
\begin{subequations}
\begin{align}
	\Dop_l \circ \Cop_l &= \id - \Kop_{l-1} \circ \Hop_{l-1} - \Hop_l \circ \Kop_l ,
	\\
	\Cop_l \circ \Dop_l &= \id - \Kop'_{l-1} \circ \Hop'_{l-1} - \Hop'_l \circ \Kop'_l ,
\end{align}
\end{subequations}
with the special end cases
\begin{subequations}
\begin{align}
	\Dop_{l_{\min}} \circ \Cop_{l_{\min}}
	&= \id - \tilde{\Hop}_{l_{\min}-1} - \Hop_{l_{\min}} \circ \Kop_{l_{\min}} , &
	\Kop_{l_{\min}} \circ \tilde{\Hop}_{l_{\min}-1} &= 0 ,
	\\
	\Cop_{l_{\min}} \circ \Dop_{l_{\min}}
	&= \id - \tilde{\Hop}'_{l_{\min}-1} - \Hop'_{l_{\min}} \circ \Kop'_{l_{\min}} , &
	\Kop'_{l_{\min}} \circ \tilde{\Hop}'_{l_{\min}-1} &= 0 ,
	\\
	\Dop_{l_{\max}+1} \circ \Cop_{l_{\max}+1}
	&= \id - \Hop_{l_{\max}} \circ \Kop_{l_{\max}} - \tilde{\Hop}_{l_{\max}+1} , &
	\tilde{\Hop}_{l_{\max}+1} \circ \Kop_{l_{\max}}  &= 0 ,
	\\
	\Cop_{l_{\max}+1} \circ \Dop_{l_{\max}+1}
	&= \id - \Hop'_{l_{\max}} \circ \Kop'_{l_{\max}} - \tilde{\Hop}'_{l_{\max}+1} , &
	\tilde{\Hop}'_{l_{\max}+1} \circ \Kop'_{l_{\max}}  &= 0 ,
\end{align}
\end{subequations}
where the $\tilde{\Hop}$ maps are allowed to be arbitrary, as long as they
satisfy the given identities.
\end{definition}

Again, our convention allows the operators constituting a homotopy or an
equivalence up to homotopy between two complexes to satisfy the same
definition when the complexes are truncated.

\begin{definition} \label{def:flat-conn}
A \emph{(linear) connection} $\DD$ on a vector bundle $V\to M$ is a
first order linear differential operator $\DD\colon \Secs(V) \to
\Secs(T^*M \otimes_M V)$ that satisfies the Leibniz rule in the sense
that $\DD(f v) = \d f \otimes v + f \DD v$, for any scalar $f$ and $v\in
\Secs(V)$. Inductively, a \emph{connection} $\DD$ uniquely gives
rise to a sequence of \emph{twisted exterior derivatives}
\begin{equation}
	\d^\DD_l \colon \Secs(\Lambda^l T^*M \otimes_M V)
		\to \Secs(\Lambda^{l+1} T^*M \otimes_M V), \quad l=0,1,\ldots,n,\ldots ,
\end{equation}
with $\d^\DD_0 = \DD$ and degenerating to $\d^\DD_l = 0$ for $l\ge n$,
that satisfy the Leibniz rule in the sense that
$\d^\DD_{l+1}(\alpha\wedge w) = \d\alpha\wedge w - \alpha \wedge \d^\DD_l
w$ for any $1$-form $\alpha$ and $w\in \Secs(\Lambda^l T^*M \otimes_M
V)$. The connection is \emph{flat} when $\d_1^\DD \d^\DD_0 = 0$, 
 in which case the operators $\d^\DD_l$ form a complex called the
\emph{($\DD$-)twisted de~Rham complex}. A section $f$ satisfying  $\DD f = 0$ is said to be 
 \emph{parallel} with respect to $\DD$.\footnote{Parallel sections are also known as \emph{flat sections}.}
\end{definition}

\begin{remark}
The above definition can be made much more explicit if we (locally)
choose coordinates $(x^a)$  and a frame $\bm{e}^i \in \Secs(V)$, so that
an arbitrary section of $v = \nu_i \bm{e}^i \in \Secs(V)$ can be
represented as a linear combination of the frame $\bm{e}^i$ with scalar
coefficients $\nu_i$. Expressing everything in components, 
\begin{equation}
	\DD_a[\nu_i \bm{e}^i]
	= \left[(\DD_a)_i^j \nu_i\right] \bm{e}^i
	= \left[\del_a \nu_i - (\FlatCon_a)_i^j \nu_j\right] \bm{e}^i ,
\end{equation}
where the matrix valued $1$-forms $(\FlatCon_a)_i^j$ are known as the
corresponding \emph{connection coefficients}. The flatness condition
$\d_0^\DD \d_1^\DD = 0$ is then equivalent to the commutativity of $\DD_a$
components as matrix differential operators for different form indices,
$[\DD_a, \DD_b]_i^j = (\DD_a)_i^k (\DD_b)_k^j - (\DD_b)_i^k (\DD_a)_k^j
= 0$. In components, the twisted exterior derivatives act on $\nu_{i,a_1
\cdots a_l} \bm{e}^i \in \Secs(\Lambda^l T^*M \otimes_M V)$ simply as
\begin{equation}
	(\d^\DD_l [\nu_i \bm{e}^i])_{a_1 \cdots a_{l+1}}
	= (l+1) \left[(\DD_{[a_1|})^j_i \nu_{j,|a_2\cdots a_n]}\right] \bm{e}^i .
\end{equation}
\end{remark}

If, as in the above Remark, the frame is chosen to be parallel with respect to $\DD$, $\DD_a
\bm{e}^i=0$, which for a flat connection is always possible locally, then the corresponding
connection coefficients vanish, $(\FlatCon_a)_i^j = 0$. Note though that
the vanishing of the $(\FlatCon_a)_i^j$ is a frame-dependent property,
while the property of being \emph{flat} is completely frame independent.
However, using a flat frame, we can see that the twisted de~Rham complex
is locally equivalent to several copies of the usual de~Rham complex,
and hence, applying appropriate versions of the Poincar\'e lemma to each
copy, we get
\begin{proposition}[{\cite{tarkhanov}\footnote{Propositions 1.2.13, 1.2.39 and 1.2.41}}] \label{prp:de-rham}
Given a flat connection $\DD$, the corresponding twisted de~Rham
complex $\d^\DD_l$, $l=0,1,\ldots,n,\ldots$ is locally exact and is
also a compatibility complex for $\DD = \d^\DD_0$.
\end{proposition}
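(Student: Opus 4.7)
The plan is to reduce the twisted case to the ordinary (untwisted) de~Rham complex by working in a local parallel frame, and then invoke classical facts for $\d$ on $\mathbb{R}^n$. First, using the flatness hypothesis $\d^\DD_1 \circ \d^\DD_0 = 0$, I would construct a local parallel frame $\bm{e}^i$ around any chosen point $x\in M$: the flatness condition is precisely the integrability condition for the linear PDE system $\DD \bm{e}^i = 0$, so the Frobenius theorem (or equivalently, path-ordered parallel transport in a contractible neighborhood, which is path-independent by flatness) produces a parallel frame on some open $U \ni x$. In this frame the connection coefficients $(\FlatCon_a)_i^j$ vanish identically, so the explicit formula from the Remark shows that, componentwise, $\d^\DD_l$ acts as the ordinary exterior derivative $\d$ applied separately to each component $\nu_{i,a_1\cdots a_l}$. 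Thus, on $U$, the twisted de~Rham complex is isomorphic to $\operatorname{rank}(V)$ independent copies of the ordinary de~Rham complex.

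Second, local exactness then follows immediately from the classical Poincar\'e lemma: shrink $U$ to a star-shaped neighborhood, apply the homotopy operator of the standard Poincar\'e lemma to each scalar component of a $\d^\DD_l$-closed section, and reassemble a primitive in the twisted complex.

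Third, for the compatibility-complex property, I would observe that the notions of compatibility and completeness of a differential operator are intrinsic (frame-independent) and local, so it suffices to verify them in the parallel frame. There the claim reduces to the statement that, on $\mathbb{R}^n$, the ordinary exterior derivative $\d\colon \Omega^l \to \Omega^{l+1}$ is a complete compatibility operator for $\d\colon \Omega^{l-1} \to \Omega^l$. The identity $\d\circ\d=0$ gives the compatibility half; completeness means that any linear differential operator $\Lop$ on $l$-forms with $\Lop\circ \d = 0$ (applied to $(l-1)$-forms) factors as $\Lop = \Lop'\circ \d$. I would derive this from the algebraic fact that the Koszul complex on the cotangent space, which is the symbol sequence of the de~Rham complex, is exact in positive degrees; this gives a factorization at the level of principal symbols, and standard inductive removal of successively lower-order remainders (the Spencer/finite-type argument) promotes it to a factorization of differential operators. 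This is precisely the content cited from Tarkhanov.

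The conceptually easy step is local exactness: once parallel frames are in hand, it is the Poincar\'e lemma. The main obstacle is the completeness half of the compatibility statement, because local exactness alone does not imply completeness. Exactness is a statement about \emph{sections}, while completeness is a statement about \emph{operators}, and bridging the two requires the symbol-level / Koszul argument sketched above. Once that is in place for $\d$ on $\mathbb{R}^n$, transporting everything back via the parallel-frame isomorphism on $U$ yields the proposition.
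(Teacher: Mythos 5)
Your proposal is correct and follows essentially the same route as the paper: pass to a local parallel (flat) frame, observe that the twisted complex becomes $\operatorname{rank}(V)$ copies of the ordinary de~Rham complex, and then invoke the classical de~Rham facts (Poincar\'e lemma for local exactness, and the operator-level compatibility property), which the paper simply delegates to Tarkhanov. Your extra sketch of the completeness half via exactness of the Koszul symbol sequence plus induction on the order of the remainder is a sound filling-in of precisely the content the paper cites rather than proves.
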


\begin{proposition}[{\cite[Lem.4]{kh-compat}}] \label{prp:compat-sufficient}
Consider two complexes of differential operators $\Kop_l$ and $\Kop_l'$, for
$l=0,1,\cdots,n-1$, and an equivalence up to homotopy between them, as
in the diagram
\begin{equation}
\begin{tikzcd}[column sep=2cm,row sep=2cm]
	\bullet \ar{r}{\Kop_0}
		\ar[swap,shift right]{d}{\Cop_0} \&
		\ar[dashed,bend left]{l}{\Hop_0}
	\bullet \ar{r}{\Kop_1}
		\ar[swap, shift right]{d}{\Cop_1} \&
	\bullet \ar[r,phantom,"\cdots"]
		\ar[swap,shift right]{d}{\Cop_2}
		\ar[dashed,bend left]{l}{\Hop_1} \&
	\bullet
		\ar[swap,shift right]{d}{\Cop_{n-1}}
		\ar{r}{\Kop_{n-1}} \&
	\bullet \ar[swap,shift right]{d}{\Cop_n}
		\ar[dashed,bend left]{l}{\Hop_{n-1}}
	\\
	\bullet \ar[swap]{r}{\Kop'_0}
		\ar[swap,shift right]{u}{\Dop_0} \&
		\ar[swap,dashed,bend right]{l}{\Hop'_0}
	\bullet \ar[swap]{r}{\Kop'_1}
		\ar[swap,shift right]{u}{\Dop_1} \&
	\bullet \ar[r,phantom,"\cdots"]
		\ar[swap,shift right]{u}{\Dop_2}
		\ar[swap,dashed,bend right]{l}{\Hop'_1} \&
	\bullet
		\ar[swap,shift right]{u}{\Dop_{n-1}}
		\ar[swap]{r}{\Kop'_{n-1}} \&
	\bullet \ar[swap,shift right]{u}{\Dop_n}
		\ar[swap,dashed,bend right]{l}{\Hop'_{n-1}}
\end{tikzcd} ,
\end{equation}
where for simplicity we are assuming that the $\tilde{\Hop}_{-1}$, $\tilde{\Hop}'_{-1}$, $\tilde{\Hop}_n$, $\tilde{\Hop}'_n$ are all zero.
\begin{enumerate}[label=\alph*)] 
 \item If $\Kop'_l$ is a compatibility complex for $\Kop'_0$, then $\Kop_l$ is a compatibility complex for $\Kop_0$. 
 \item If $\Kop'_l$ is locally exact, then $\Kop_l$ is locally exact.
\end{enumerate}  
\end{proposition}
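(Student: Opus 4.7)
Both statements are diagram-chases, and my plan in each case is to transport the given data from the unprimed complex across to the primed one via the $\Cop_l$, to solve the corresponding problem on the primed side using the hypothesis, and then to transport the answer back via the $\Dop_l$, letting the homotopy identity $\Dop_l\circ\Cop_l = \id - \Kop_{l-1}\circ\Hop_{l-1} - \Hop_l\circ\Kop_l$ together with the morphism identities $\Cop_{l+1}\circ\Kop_l = \Kop'_l\circ\Cop_l$ and $\Dop_{l+1}\circ\Kop'_l = \Kop_l\circ\Dop_l$ absorb the correction terms that arise.

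For part (a), since both $\Kop_l$ and $\Kop'_l$ are already complexes, only completeness of each $\Kop_l$ as a compatibility operator for $\Kop_{l-1}$ requires proof. Fix $l\ge 1$ and suppose $\Lop\circ\Kop_{l-1}=0$. I would first form $\Lop\circ\Dop_l$ on the primed side; the morphism identity for $\Dop$ gives $(\Lop\circ\Dop_l)\circ\Kop'_{l-1} = \Lop\circ\Kop_{l-1}\circ\Dop_{l-1} = 0$, so $\Lop\circ\Dop_l$ is a compatibility operator for $\Kop'_{l-1}$. Completeness of $\Kop'_l$ then yields an operator $\Pop$ with $\Lop\circ\Dop_l = \Pop\circ\Kop'_l$. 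Composing with $\Cop_l$ on the right and applying the morphism and homotopy identities gives
\begin{equation*}
\Pop\circ\Cop_{l+1}\circ\Kop_l \;=\; \Pop\circ\Kop'_l\circ\Cop_l \;=\; \Lop\circ\Dop_l\circ\Cop_l \;=\; \Lop - \Lop\circ\Kop_{l-1}\circ\Hop_{l-1} - \Lop\circ\Hop_l\circ\Kop_l ,
\end{equation*}
and since the middle term on the right vanishes by hypothesis, a rearrangement produces the desired factorization $\Lop = (\Pop\circ\Cop_{l+1} + \Lop\circ\Hop_l)\circ\Kop_l$.

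For part (b), I would start from a local section $f_l$ satisfying $\Kop_l f_l = 0$ on a neighborhood of a point $x$. Applying $\Cop_l$ and the morphism identity gives $\Kop'_l(\Cop_l f_l) = \Cop_{l+1}\Kop_l f_l = 0$, so by local exactness of the primed complex there exists, on a possibly smaller neighborhood of $x$, a section $g'_{l-1}$ with $\Cop_l f_l = \Kop'_{l-1} g'_{l-1}$. Then I would apply $\Dop_l$ and use the dual morphism identity to obtain $\Dop_l\Cop_l f_l = \Kop_{l-1}\Dop_{l-1} g'_{l-1}$, while the homotopy identity combined with $\Kop_l f_l = 0$ gives $\Dop_l\Cop_l f_l = f_l - \Kop_{l-1}\Hop_{l-1} f_l$. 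Equating and rearranging produces $f_l = \Kop_{l-1}(\Dop_{l-1} g'_{l-1} + \Hop_{l-1} f_l)$, exhibiting $f_l$ as locally in the image of $\Kop_{l-1}$.

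Neither argument presents a genuine obstacle; the only care needed is the bookkeeping at the endpoints $l=0$ and $l=n-1$, where the standing hypothesis that the corner homotopies $\tilde{\Hop}$ vanish is precisely what is needed so that the two identities above hold without extra boundary contributions. The setup being symmetric under exchange of the primed and unprimed sides, each of (a) and (b) would in fact transfer in either direction, though only one direction is asserted.
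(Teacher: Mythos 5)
Your proof is correct and is essentially the argument behind the result as the paper uses it: the paper does not reprove this proposition but imports it from \cite[Lem.~4]{kh-compat}, and your strategy of transporting the data across with $\Cop_l$, solving on the primed side by completeness (resp.\ local exactness) of $\Kop'_l$, and pulling back with $\Dop_l$ while the homotopy identity $\Dop_l\circ\Cop_l=\id-\Kop_{l-1}\circ\Hop_{l-1}-\Hop_l\circ\Kop_l$ absorbs the correction terms is exactly the standard diagram chase used there. One small correction: none of the identities you actually invoke involve the corner homotopies $\tilde{\Hop}_{-1}$, $\tilde{\Hop}'_{-1}$, $\tilde{\Hop}_n$, $\tilde{\Hop}'_n$ (both chases only use the interior homotopy identities at nodes $1\le l\le n-1$ together with the morphism squares), so their assumed vanishing is a simplifying normalization of the statement rather than, as your closing paragraph suggests, something your endpoint bookkeeping needs.
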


\section{Preliminaries and notation} \label{sec:prel}  
Unless otherwise stated we work in a Lorentzian 4-dimensional spacetime of signature ${+}{-}{-}{-}$, using the 2-spinor formalism following the notations and conventions of \cite{Penrose:1986fk}. In particular, indices are lowered and raised with $\epsilon_{AB} = - \epsilon_{BA}$ and its inverse according to the rules
\begin{align} \label{eq:epsilonRasingLowering}
\kappa_B = \epsilon_{AB}\kappa^A, &&
\kappa^A = \epsilon^{AB} \kappa_B.
\end{align}
Analogously, on the primed spin space, indices are shifted with $\bar{\epsilon}_{A'B'}$ and its inverse. These isomorphisms will be used throughout this section, for example to identify vectors with 1-forms. The identity map on a vector space $V_k$ is denoted $\id_k$.

\subsection{Spinor calculus} \label{sec:spinor}
We make use of the fact that any tensor can be decomposed into a set of symmetric spinors. Let $\SymSpin_{k,l}$ be the space of symmetric valence $(k,l)$ spinors. In abstract index notation, elements are of the form $\phi_{A_1 \dots A_k A'_1 \dots A'_l} \in \SymSpin_{k,l}$. Sometimes it is convenient to suppress the valence and/or indices and we write e.g. $\phi \in \mathcal{S}$ or $\phi \in \mathcal{S}_{k,l}$. Furthermore, for collections of symmetric spinors we use the shorthand $\SymSpin_{k,l}=(k,l)$. For example the decomposition of a symmetric 2-tensor is an element of
\begin{align}
 \begin{bmatrix} (2,2) \\ (0,0) \end{bmatrix},
\end{align}
with the trace-free symmetric part as first element and the trace as second element. For readers less familiar with the spinor formalism, it is worth noting
that for spinors with even numbers of indices it is possible to identify
each of the spinor spaces used in this paper with complexified tensor spaces
according to the following table: 
\begin{center}
\begin{tabular}{rccccccc}
	\\[0.5ex]
	\hline\hline
	\\[-2ex]
	spinor & $(0,0)$ & $(1,1)$
		& $(2,2)$
		& $(2,0)$ & $(0,2)$
		& $(4,0)$ & $(0,4)$
	\\[2ex]
	tensor
		& $\mathbb{C},\yd{1,1,1,1}$
		& $\yd{1},\yd{1,1,1}$
		& $\yd{2}_0$
		& $\yd{1,1} + i{*}\yd{1,1}$ & $\yd{1,1} - i{*}\yd{1,1}$
		& $\yd{2,2}_0 + i{*}\yd{2,2}_0$ & $\yd{2,2}_0 - i{*}\yd{2,2}_0$
	\\[2.5ex]
	& \parbox{3em}{\centering scalar,\\ 4-form}
		& \parbox{3em}{\centering vector,\\ 3-form}
		& \parbox{4.5em}{\centering symmetric traceless 2-tensor}
		& \parbox{4em}{\centering anti-self-dual 2-form}
		& \parbox{4em}{\centering self-dual 2-form}
		& \parbox{4em}{\centering anti-self-dual Weyl}
		& \parbox{4em}{\centering self-dual Weyl}
	\\[3.25ex]
	\hline\hline
	\\[0.5ex]
\end{tabular}
\end{center}
Recall that Young diagrams, which we have used in the above table,
represent tensor spaces obtained by filling each box with an index, then
symmetrizing along the rows and finally antisymmetrizing along the
columns~\cite[vol.I, p.143]{Penrose:1986fk}. The $0$ subscript further
denotes the traceless part, while $*$ denotes the Hodge dual against the
antisymmetric indices from the first column of the diagram. The usual
4-dimensional identifications by Hodge duality are also included.

It is convenient to introduce the following product between symmetric spinors.
\begin{definition}
Let $i,j,k,l,m,n$ be integers with $i \leq min(k,m)$ and $j \leq min(l,n)$. The symmetric product is a bilinear form 
\begin{align}
\overset{i,j}{\odot}: \SymSpin_{k,l} \times \SymSpin_{m,n} \to{}& \SymSpin_{k+m-2i,l+n-2j}.
\end{align}
For $\phi \in \SymSpin_{k,l}, \psi \in \SymSpin_{m,n}$ it is given by
\begin{align}
(\phi\overset{i,j}{\odot}\psi)_{A_1 \dots A_{k+m-2i}}^{A'_1 \dots A'_{l+n-2j}}={}& 
\phi_{(A_1 \dots A_{k-i-1}}^{(A'_1 \dots A'_{l-j-1}| B_1 \dots B_i B'_1 \dots B'_j|} 
\psi^{A'_{l-j} \dots A'_{l+n-2j})}_{A_{k-i} \dots A_{k+m-2i})B_1 \dots B_i B'_1 \dots B'_j}.
\end{align}
\end{definition} 
This operation involves $i$ contractions with $\epsilon_{AB}$ (and/or its inverse) and $j$ contractions with $\bar{\epsilon}_{A'B'}$ (and/or its inverse) as indicated in \eqref{eq:epsilonRasingLowering}.

\begin{definition}[\protect{\cite[\S 2.1]{ABB:symop:2014CQGra..31m5015A}}]
The four \emph{fundamental spinor operators} are the differential operators
$$
\sDiv:\mathcal{S}_{k,l}\rightarrow \SymSpin_{k-1,l-1}, \quad 
\sCurl:\mathcal{S}_{k,l}\rightarrow \SymSpin_{k+1,l-1}, \quad 
\sCurlDagger:\mathcal{S}_{k,l}\rightarrow \SymSpin_{k-1,l+1}, \quad 
\sTwist:\mathcal{S}_{k,l}\rightarrow \SymSpin_{k+1,l+1}
$$
defined via
\begin{align} 
\sDiv \varphi ={}& (\nabla \overset{1,1}{\odot}\varphi), &&&
\sCurl \varphi ={}& (\nabla \overset{0,1}{\odot}\varphi), &&&
\sCurlDagger \varphi ={}& (\nabla \overset{1,0}{\odot}\varphi), &&&
\sTwist \varphi ={}& (\nabla \overset{0,0}{\odot}\varphi).
\label{eq:FundamentalOperators}
\end{align}
The operators are called respectively the divergence, curl, curl-dagger, and twistor operators. 
\end{definition}
The irreducible decomposition of a covariant derivative of a symmetric spinor $\varphi \in \SymSpin_{k,l}$ can be written as
\begin{align}
\nabla_A{}^{A'} \varphi_{A_1 \cdots A_k}{}^{A'_1 \cdots A'_l} ={}& 
(\sTwist \varphi)_{AA_1 \cdots A_k}{}^{A'A'_1 \cdots A'_l} - \tfrac{l}{l+1} \bar{\epsilon}^{A'(A'_1}(\sCurl \varphi)_{AA_1 \cdots A_k}{}^{A'_2 \cdots A'_l)} \nonumber \\
&\hspace{-15ex}- \tfrac{k}{k+1}\epsilon_{A(A_1}(\sCurlDagger \varphi)_{A_2 \cdots A_k)}{}^{A'A'_1 \cdots A'_l} + \tfrac{kl}{(k+1)(l+1)}\epsilon_{A(A_1}\bar{\epsilon}^{A'(A'_1}(\sDiv \varphi)_{A_2 \cdots A_k)}{}^{A'_2 \cdots A'_l)}.
\end{align}
Note that in contrast to \cite{ABB:symop:2014CQGra..31m5015A} we suppress valence indices on the operators. With respect to complex conjugation, the operators satisfy $\overline{\sDiv} = \sDiv$, $\overline{\sTwist} = \sTwist$, $\overline{\sCurl} = \sCurlDagger$, $\overline{\sCurlDagger} = \sCurl$, but note that $\overline{\SymSpin_{k,l}}=\SymSpin_{l,k}$. Commutation formulas for the fundamental operators are given in \cite[\S 2.2]{ABB:symop:2014CQGra..31m5015A}. 

In this notation, the Weyl spinor $\Psi\in \SymSpin_{4,0}$, the trace-free Ricci spinor $\Phi\in \SymSpin_{2,2}$ and the Ricci scalar $\Lambda\in \SymSpin_{0,0}$ are related by the Bianchi identity 
\begin{align}
\sCurlDagger \Psi={}& \sCurl \Phi, &
\sDiv \Phi ={}& -3\sTwist \Lambda. \label{eq:Bianchi}
\end{align}

\subsection{Kerr geometry}
The main feature of the Kerr geometry is encoded in the Killing spinor $\kappa\in \SymSpin_{2,0}$ found in \cite{walker:penrose:1970CMaPh..18..265W}, satisfying 
\begin{align} \label{eq:Twistkappa}
\sTwist \kappa = 0 
\end{align} 
In a principal dyad the Killing spinor takes the simple form 
\begin{align} \label{eq:TypeDKS}
\kappa_{AB}={}&-2 \kappa_1 o_{(A}\iota_{B)},
\end{align}
with $\kappa_1 \propto \Psi_2^{-1/3}$ and $\Psi_2$ being the only non-vanishing component of the Weyl spinor. Note that $\kappa_1$ and $\Psi_2$ can be expressed covariantly via the relations 
\begin{align}
\kappa_{AB} \kappa^{AB} &= -2 \kappa_{1}{}^2,&
\Psi_{ABCD} \Psi^{ABCD}&=6 \Psi_{2}^2.
\label{eq:Psi2Covariant}
\end{align}
Hence, we can allow $\kappa_1$ and $\Psi_2$ in covariant expressions.  The tensor product of $\kappa_{AB}$ with a  symmetric spinor has at most three different irreducible components. These involve either zero, one or two contractions and symmetrization. For these operations we introduce the $\mathcal{K}$-operators (c.f. \cite[Definition II.4]{2016arXiv160106084A})
\begin{definition}
Given the Killing spinor \eqref{eq:TypeDKS}, define the operators $\mathcal{K}^i:\SymSpin_{k,l}\rightarrow \SymSpin_{k-2i+2,l}, i=0,1,2$  via
\begin{align}
(\mathcal{K}^0 \varphi)\defeq{}&2\kappa_1^{-1} (\kappa \overset{0,0}{\odot}\varphi), &&&
(\mathcal{K}^1 \varphi)\defeq{}&
\kappa_1^{-1}(\kappa  \overset{1,0}{\odot}\varphi), &&&
(\mathcal{K}^2 \varphi)\defeq{}&- \tfrac{1}{2}\kappa_1^{-1}(\kappa  \overset{2,0}{\odot} \varphi).
 \label{eq:Kprojectors}
\end{align}
\end{definition}
Note that the complex conjugated operators act on the primed indices in the analogous way. To compare to results in the literature in section~\ref{sec:equiv} we define an algebraic projection operator on $\SymSpin_{4,0}$ by
\begin{align}\label{eq:P2Def}
\mathcal{P}^{2}\defeq{}& \mathcal{K}^1 \mathcal{K}^1 \mathcal{K}^1 \mathcal{K}^1  - \tfrac{1}{16}  \mathcal{K}^0 \mathcal{K}^1 \mathcal{K}^1 \mathcal{K}^2, 
\end{align}
cf. \cite{2016arXiv160106084A}. In a principal dyad, $\mathcal{P}^{2} \varphi$ has components $(\varphi_0, 0, 0, 0, \varphi_4)$. 
It follows from \eqref{eq:Twistkappa} that
\begin{align}\label{eq:XiDef}
\xi_{AA'}\defeq{}&(\sCurlDagger_{2,0} \kappa)_{AA'}, 
\end{align}
is a Killing vector field, which is real after suitably normalizing $\kappa_{AB}$. The second Killing vector follows from contraction with the Killing spinor and its complex conjugate, or equivalently a Killing tensor, via
\begin{align}
\zeta_{AA'}\defeq{}&\tfrac{9}{2} \bigl(\kappa_{AB} \bar{\kappa}_{A'B'} + \tfrac{1}{4} \epsilon_{AB} \bar\epsilon_{A'B'} (\kappa_{CD} \kappa^{CD} + \bar{\kappa}_{C'D'} \bar{\kappa}^{C'D'})\bigr) \xi^{BB'}.
\end{align}
Another important vector field is defined by
\begin{align} \label{eq:UU11Def}
U_{AA'}\defeq{}&- \frac{\kappa_{AB} \xi^{B}{}_{A'}}{3 \kappa_1^2} = - \nabla_{AA'}\log(\kappa_1) .
\end{align}
Although our treatment is independent of coordinate choices, we get $\kappa_1=-(r-ia\cos\theta)/3$, $\xi=\partial_t$  and $\zeta=a^2 \partial_t + a \partial_\phi$ in standard Boyer-Lindquist coordinates. Note that $\xi^a \zeta_a \neq 0$.

As an example of the irreducible spinor notation used in this paper, the de Rham complex is given by 
\begin{gather} \label{eq:deRhamComplex}
\begin{tikzcd}[column sep=10ex]
	\begin{bmatrix} (0,0) \end{bmatrix}
		\ar{r}{\begin{bmatrix} \sTwist \end{bmatrix}}	\&
	\begin{bmatrix} (1,1) \end{bmatrix}
		\ar{r}{\begin{bmatrix} \sCurlDagger\\ \sCurl \end{bmatrix}}	\&
	\begin{bmatrix} (0,2)\\ (2,0) \end{bmatrix}
		\ar{r}{\begin{bmatrix} \sCurl & -\sCurlDagger  \end{bmatrix}}	\&
	\begin{bmatrix} (1,1) \end{bmatrix}
		\ar{r}{\begin{bmatrix} \sDiv  \end{bmatrix}}	\&
	\begin{bmatrix} (0,0) \end{bmatrix}
\end{tikzcd}.
\end{gather}
Here 3-forms are identified with vectors via Hodge duality, see the table at the beginning of this section for more details about the spinor representation.

\subsection{Linearized curvature}
We will describe the linearized curvature operators in spinor form using the operator $\vartheta$ introduced in 
\cite{Backdahl:2015yua}. On a Kerr background the linearized curvature spinors 
\begin{align}
\begin{bmatrix}
\vartheta \Lambda \\
\vartheta \Phi \\
\vartheta \Psi 
\end{bmatrix} : \begin{bmatrix} (2,2) \\ (0,0) \end{bmatrix} \rightarrow \begin{bmatrix} (0,0)\\ (2,2) \\ (4,0) \end{bmatrix}
\end{align}
take the form
\begin{align} \label{eq:LinCurvOps}
\begin{bmatrix}
\vartheta \Lambda \\
\vartheta \Phi \\
\vartheta \Psi 
\end{bmatrix}={}&\begin{bmatrix}
- \tfrac{1}{24}\sDiv \sDiv & \tfrac{1}{32}\sDiv \sTwist \\
 \tfrac{1}{6}\sTwist \sDiv + \tfrac{1}{2}\sCurlDagger \sCurl + \tfrac{1}{2}\Psi_{2} - \tfrac{3}{4}\Psi_{2}\mathcal{K}^0 \mathcal{K}^2 & - \tfrac{1}{8}\sTwist \sTwist \\
\tfrac{1}{2}\sCurl \sCurl & - \tfrac{3}{32}\Psi_{2}\mathcal{K}^0 \mathcal{K}^0
\end{bmatrix}.
\end{align}

\section{Killing compatibility complex on Kerr} \label{sec:kerr}
In this section we present our main results, which constitute a proof of
the completeness of the set of local gauge invariants on Kerr spacetime,
which were introduced in~\cite{ab-kerr}. In
addition, we construct the full compatibility complex $\Kop_l$,
$l=0,1,2,3$, for the Killing operator $\Kop = \Kop_0$ on
Kerr. That is, the components of
$\Kop_l$ identify a complete list of differential relations
between the components of $\Kop_{l-1}$. As will be shown in Section~\ref{sec:equiv}, the operator $\Kop_1$ and the invariants from~\cite{ab-kerr} factor through each other, thus confirming their completeness.

The proof relies in an essential way on the material reviewed in
Sections~\ref{sec:compat} and~\ref{sec:prel}. Namely, the fundamental
spinorial objects and differential operators used to give explicit formulas for the
compatibility complex $\Kop_l$, as well as its equivalence up to
homotopy to an auxiliary complex $\Kop'_l = \d_l^\DD$,
\begin{equation} \label{diag:DiffComplex4D}
\begin{tikzcd}
 V_0 \ar[swap,shift right]{d}{\Cop_0} \arrow[r, "\Kop_0"]  \& 
 V_1 \ar[swap,shift right]{d}{\Cop_1} \arrow[r, "\Kop_1"] \arrow[l, dashed, bend left=20, "\Hop_0"]  \& 
 V_2 \ar[swap,shift right]{d}{\Cop_2} \arrow[r, "\Kop_2"] \arrow[l, dashed, bend left=20, "\Hop_1"] \& 
 V_3 \ar[swap,shift right]{d}{\Cop_3} \arrow[r, "\Kop_3"] \arrow[l, dashed, bend left=20, "\Hop_2"] \& 
 V_4 \ar[swap,shift right]{d}{\Cop_4} \arrow[l, dashed, bend left=20, "\Hop_3"]
 \\
V'_0 \ar[swap,shift right]{u}{\Dop_0} \arrow[r, "\Kop'_0"'] \& 
V'_1 \ar[swap,shift right]{u}{\Dop_1} \arrow[r, "\Kop'_1"'] \arrow[l, dashed, bend right=20, "\Hop'_0"'] \& 
V'_2 \ar[swap,shift right]{u}{\Dop_2} \arrow[r, "\Kop'_2"'] \arrow[l, dashed, bend right=20, "\Hop'_1"'] \& 
V'_3 \ar[swap,shift right]{u}{\Dop_3} \arrow[r, "\Kop'_3"'] \arrow[l, dashed, bend right=20, "\Hop'_2"'] \& 
V'_4 \ar[swap,shift right]{u}{\Dop_4} \arrow[l, dashed, bend right=20, "\Hop'_3"']
\end{tikzcd} ,
\end{equation} 
which happens to be the twisted de~Rham complex corresponding to the
unique flat connection $\DD$ defined on the 2-dimensional sub-bundle $V'_0
\hookrightarrow V_0 = T^*M$ spanned by local solutions of the Killing
equation $\Kop[\nu] = 0$ such that the Killing vectors themselves
are parallel with respect to $\DD$. If we choose to write an arbitrary section of $V'_0$ as
$v = \alpha \xi + \beta \zeta$, where $\alpha$, $\beta$ are arbitrary
scalar functions and the 1-forms $\xi$, $\zeta$ constitute a basis of the
$2$-dimensional space of solutions of the Killing equation on Kerr, then
this flat connection simply acts as
\begin{equation} \label{eq:DDdef}
	\DD[v] = (\d\alpha) \xi + (\d\beta) \zeta ,
\end{equation}
where $\d$ is the ordinary exterior derivative. That is, if we
choose to parametrize the $V'_0$ bundle using the $(\xi,\zeta)$-frame,
then the connection coefficients of $\DD$ vanish identically and the complex $\Kop'_l$ simply corresponds to the direct sum of
two copies of the ordinary de~Rham complex \eqref{eq:deRhamComplex}. However here we choose the frame 
\begin{align} \label{eq:frame}
e_{1}{}^a \defeq{}& \xi^a, &&&
e_{2}{}^a \defeq{}& \zeta^a,&&&
e_{3}{}^a \defeq{}& \kappa_{1}{}^2 \bar{\kappa}_{1'}{}U{}^a,&&&
e_{4}{}^a \defeq{}&\overline{e_3{}^a},
\end{align}
for $V_0$ with the co-frame $e^i{}_a$ defined to satisfy $e^i{}_a  e_j{}^a = \delta^i_j$. Note that $e_1$ and $e_2$ are orthogonal to $e_3$ and $e_4$, as can be seen from \eqref{eq:UU11Def} in Boyer-Lindquist coordinates, and we use $e^1$ and $e^2$ as a frame for $V'_0$.\footnote{By definition, $e^1(\xi) = 1 = e^2(\zeta)$ and $e^1(\zeta)= 0 = e^2(\xi)$, so the difference from the frame used in  \eqref{eq:DDdef} is essentially due to the non-orthogonality of $\xi$ and $\zeta$.} In this 2-dimensional frame, the connection coefficients of $\DD$ are non-vanishing, but it turns out to be preferred for computations in section~\ref{sec:C2tildeComps}. The twisted connection is given by the four real 1-forms
\begin{align}
\FlatCon^{i}{}_{j}{}\defeq{}&(-2e_{j}{}{\overset{1,1}{\odot}}\sTwist e^{i}{}), \qquad i,j=1,2,
\end{align}
and the twisted de Rham complex is given by two copies of the de Rham complex \eqref{eq:deRhamComplex}, but with non-vanishing connection, via
\begin{subequations} \label{eq:Kpops} 
\begin{align}
\Kop'_0\defeq{}&\begin{bmatrix}
\sTwist\bullet - \FlatCon^{1}{}_{1}{}{\overset{0,0}{\odot}}\bullet & - \FlatCon^{2}{}_{1}{}{\overset{0,0}{\odot}}\bullet\\
- \FlatCon^{1}{}_{2}{}{\overset{0,0}{\odot}}\bullet & \sTwist\bullet - \FlatCon^{2}{}_{2}{}{\overset{0,0}{\odot}}\bullet
\end{bmatrix}, \\
\Kop'_1\defeq{}&\begin{bmatrix}
\sCurlDagger\bullet - \FlatCon^{1}{}_{1}{}{\overset{1,0}{\odot}}\bullet & - \FlatCon^{2}{}_{1}{}{\overset{1,0}{\odot}}\bullet\\
\sCurl\bullet - \FlatCon^{1}{}_{1}{}{\overset{0,1}{\odot}}\bullet & - \FlatCon^{2}{}_{1}{}{\overset{0,1}{\odot}}\bullet\\
- \FlatCon^{1}{}_{2}{}{\overset{1,0}{\odot}}\bullet & \sCurlDagger\bullet - \FlatCon^{2}{}_{2}{}{\overset{1,0}{\odot}}\bullet\\
- \FlatCon^{1}{}_{2}{}{\overset{0,1}{\odot}}\bullet & \sCurl\bullet - \FlatCon^{2}{}_{2}{}{\overset{0,1}{\odot}}\bullet
\end{bmatrix}, \\
\Kop'_2\defeq{}&\begin{bmatrix}
\sCurl\bullet - \FlatCon^{1}{}_{1}{}{\overset{0,1}{\odot}}\bullet & - \sCurlDagger\bullet + \FlatCon^{1}{}_{1}{}{\overset{1,0}{\odot}}\bullet & - \FlatCon^{2}{}_{1}{}{\overset{0,1}{\odot}}\bullet & \FlatCon^{2}{}_{1}{}{\overset{1,0}{\odot}}\bullet\\
- \FlatCon^{1}{}_{2}{}{\overset{0,1}{\odot}}\bullet & \FlatCon^{1}{}_{2}{}{\overset{1,0}{\odot}}\bullet & \sCurl\bullet - \FlatCon^{2}{}_{2}{}{\overset{0,1}{\odot}}\bullet & - \sCurlDagger\bullet + \FlatCon^{2}{}_{2}{}{\overset{1,0}{\odot}}\bullet
\end{bmatrix}, \\
\Kop'_3\defeq{}&\begin{bmatrix}
\sDiv\bullet - \FlatCon^{1}{}_{1}{}{\overset{1,1}{\odot}}\bullet & - \FlatCon^{2}{}_{1}{}{\overset{1,1}{\odot}}\bullet\\
- \FlatCon^{1}{}_{2}{}{\overset{1,1}{\odot}}\bullet & \sDiv\bullet - \FlatCon^{2}{}_{2}{}{\overset{1,1}{\odot}}\bullet
\end{bmatrix}.
\end{align}
\end{subequations}
The flatness of the connection, as evident in \eqref{eq:DDdef}, is equivalent to
\begin{align} \label{eq:FlatConnectionConditions}
\sCurl \FlatCon^i{}_j = - \sum_{k=1}^2 \FlatCon^i{}_k {\overset{1,0}{\odot}} \FlatCon^k{}_j, &&
\sDiv \FlatCon^i{}_j = 0
\end{align}
and the complex conjugate relations.
\begin{remark} \label{rmk:intuition}
The construction of the complex $\Kop_l$ is heavily patterned on the
analogous construction for the Schwarzschild spacetime carried out
in~\cite[Sec.3.3]{kh-compat}. As such, we do not reproduce a fully
detailed discussion of the construction, but only give the final result
and enough information to show that the construction is correct, namely
that all the identities implicit in the
diagram~\eqref{diag:DiffComplex4D} hold true. However, we can briefly
summarize the intuition behind the construction. The resulting $\Kop_1$
operator consists of two groups of invariants, $\Kop'_1\Cop_1$ and
$\Pop^1_\top (\id_1 - \Kop_0 \Hop_0)$, evident in the
notation of~\eqref{eq:1aops} below. The construction of each group mimics a
well-known pattern, both of which can be conveniently found
in the work of Martel \& Poisson~\cite{martel-poisson}, which reviews the construction of mode-level
gauge invariants on the Schwarzschild background.

The pattern for $\Pop^1_\top (\id_1 - \Kop_0 \Hop_0)$ is as follows.
In~\cite{martel-poisson}, whose equations we will prefix by MP for clarity,
after mode decomposition, Equations~(MP4.6--9) show
the explicit gauge transformations of the various even metric components
$h_{ab}$, $j_a$, $K$ and $G$. Then~(MP4.12) identifies $\varepsilon_a$ as
a differential operator on metric components that transforms exactly by
the vector part $\xi_a$ of the gauge parameters; for us that operator is
$\Hop_0$. The gauge invariant variables $\tilde{h}_{ab}$~(MP4.10) 
are then constructed by subtracting from the
corresponding metric components a differential operator acting on
$\varepsilon_a$ to exactly cancel their transformation by the gauge
parameter $\xi_a$; for us this subtraction takes the form $\id - \Kop_0
\Hop_0$, with the projection $\Pop^1_\top$ picking out precisely those
metric components for which the cancellation of the gauge parameter
dependence is complete. The same pattern is explicitly recognized in the
construction of the gauge invariant scalars $\{\mathcal{I}_1,
\mathcal{I}_2, \mathcal{I}_3\}$ in Section~II of~\cite{mobpm-kerr}.

The pattern for $\Kop'_1 \Cop_1$ also appears in~\cite{martel-poisson},
but somewhat implicitly. After mode decomposition, Equations~(MP5.5--6)
show the explicit gauge transformations of the odd metric
components $h_a$ and $h_2$, where the former can be rewritten
$h'_a/r^2 = h_a/r^2 - \nabla_a (\xi/r^2)$, where $r$ is the standard Schwarzschild radial coordinate. 
What is crucial here is that the dependence on the gauge parameter $\xi$ appears
through the \emph{gradient} of $\xi/r^2$; for us the analogous
identity is $\Cop_1 \Kop_0 = \Kop'_0 \Cop_0$, where $\Cop_0$ is
analogous to the rescaling of the gauge parameter $\xi$, $\Cop_1$ is
analogous to the projection onto the rescaled metric components
$h_a/r^2$, and $\Kop'_0$ is analogous to the gradient. Then the
unnumbered formula in Section~V.C of~\cite{martel-poisson} shows that
the gauge invariant \emph{Cunningham-Price-Moncrief} scalar can be
defined as $\Psi_{\text{odd}} \sim  \varepsilon^{ab}\nabla_{a}
(h_{b}/r^2)$. What is crucial here is the appearance of the curl of
$h_a/r^2$, which precisely kills the gauge transformation of
$h_a/r^2$ by the gradient of $\xi/r^2$, which for us is
analogous to the composition $\Kop'_1 \Cop_1$, where $\Kop'_1$ plays the
role of the higher dimensional curl.
\end{remark}

If one strips away all the layers of abstraction from the results of
Section~\ref{sec:compat}, the remaining core result is that a judicious
application of the two patterns from the remark above is sufficient to
construct a complete set of linear gauge invariant observables (on
geometries where the number of independent Killing vectors is locally
constant).

Returning to the construction, the differential operators $\Kop_l$ and $\Kop'_l$ act between functions
valued in the vector spaces $\Vec_i$ and $\Vec'_i$, which are composed
of symmetric spinors as follows
\begin{subequations} 
\begin{align}
\Vec_0 &\defeq \begin{bmatrix} (1,1) \end{bmatrix}, &
\Vec_1 &\defeq \begin{bmatrix} (2,2) \\ (0,0) \end{bmatrix}, &
\Vec_2 &\defeq \begin{bmatrix} \Vec'_2 \\ (0,0) \\ (0,0) \\ (0,0) \end{bmatrix}, &
\Vec_3 &\defeq \begin{bmatrix} \Vec'_3 \\ (0,0) \\(0,0) \\ (0,0) \end{bmatrix}, &
\Vec_4 &\defeq \Vec'_4,
\\
\Vec'_0 &\defeq \begin{bmatrix} (0,0) \\ (0,0) \end{bmatrix}, &
\Vec'_1 &\defeq \begin{bmatrix} (1,1) \\ (1,1) \end{bmatrix}, &
\Vec'_2 &\defeq \begin{bmatrix} (0,2) \\ (2,0) \\ (0,2) \\(2,0)  \end{bmatrix}, &
\Vec'_3 &\defeq \begin{bmatrix} (1,1) \\ (1,1) \end{bmatrix}, &
\Vec'_4 &\defeq \begin{bmatrix} (0,0) \\ (0,0) \end{bmatrix}.
\end{align}
\end{subequations}

Due to the geometry of Kerr spacetime there are certain distinguished subspaces of $\Vec_i, \Vec'_i$.
These subspaces are analogous to those identified in the
construction of the compatibility complex on
Schwarzschild~\cite[Sec.3.3]{kh-compat}.
Before we discuss the differential operators for the complex, let us describe subspaces for $\Vec_0, \Vec_1, \Vec'_1$. $\Vec_0$ naturally decomposes into the two dimensional space $\Vec_{\parallel}$, spanned by the Killing vectors $\xi, \zeta$, and its orthogonal complement $\Vec_\perp$, with corresponding mappings
\begin{gather}
\begin{tikzcd}
	V_0 = \begin{bmatrix} (1,1) \end{bmatrix}
		\ar[shift left]{r}{\Pop^0_\parallel}
	\&
	\begin{bmatrix} (0,0) \\ (0,0) \end{bmatrix} = V_\parallel \cong V'_0
		\ar[shift left]{l}{\Pop^\parallel_0}
\end{tikzcd} ,
	\\
\begin{tikzcd}
	V_0 = \begin{bmatrix} (1,1) \end{bmatrix}
		\ar[shift left]{r}{\Pop^0_\perp}
	\&
	\begin{bmatrix} (0,0) \\ (0,0) \end{bmatrix} = V_\perp
		\ar[shift left]{l}{\Pop^\perp_0}
\end{tikzcd}.
\end{gather}
The projection operators in the frame \eqref{eq:frame} take the explicit form
\begin{align}
\label{eq:PopDef}
\Pop^0_{\parallel}  \defeq{}& \begin{bmatrix} 
e_{1}{} \overset{1,1}{\odot}\bullet\\
e_{2}{} \overset{1,1}{\odot}\bullet
\end{bmatrix}, 
&
\Pop^{\parallel}_0  \defeq{}& \begin{bmatrix} 
e^{1}{} \overset{0,0}{\odot}\bullet &
e^{2}{} \overset{0,0}{\odot}\bullet
\end{bmatrix}, 
&
\Pop^0_\perp  \defeq{}& \begin{bmatrix} 
e_{3}{} \overset{1,1}{\odot}\bullet\\
e_{4}{} \overset{1,1}{\odot}\bullet
\end{bmatrix}, 
&
\Pop^\perp_0  \defeq{}& \begin{bmatrix} 
e^{3}{} \overset{0,0}{\odot}\bullet &
e^{4}{} \overset{0,0}{\odot}\bullet
\end{bmatrix},
\end{align}
and satisfy
\begin{align}\label{eq:V0decomp}
\Pop^0_{\parallel} \Pop^{\parallel}_0 ={}& \id_{\parallel},
&
\Pop^0_\perp \Pop^\perp_0 ={}& \id_{\perp},
&
\Pop^0_{\parallel} \Pop^\perp_0 ={}& 0,
&
\Pop^0_\perp \Pop^{\parallel}_0 ={}& 0,
&
\Pop^{\parallel}_0 \Pop^0_{\parallel} + \Pop^\perp_0 \Pop^0_\perp ={}& \id_0.
\end{align}
The space of symmetric 2-tensors, $\Vec_1$, incorporates a 3-dimensional subspace $\Vec_\top$ spanned by the $33$, $34$ and $44$ frame components. It is distinguished due to the non-trivial background curvature $\Psi_2$ and characterized by the maps
\begin{gather}
\begin{tikzcd}
	V_1 = \begin{bmatrix} (2,2) \\ (0,0) \end{bmatrix}
		\ar[shift left]{r}{\Pop^1_\top}
	\&
	\begin{bmatrix} (0,0) \\ (0,0) \\ (0,0) \end{bmatrix} = V_\top
		\ar[shift left]{l}{\Pop^\top_1}
\end{tikzcd} ,
\end{gather}
where
\begin{align}
\Pop^1_\top \defeq{}& \begin{bmatrix}
\Pop^1_{33} \\ \Pop^1_{34} \\ \Pop^1_{44}
\end{bmatrix}, \qquad \text{ with }  \qquad
\Pop^1_{ij} \defeq{}\begin{bmatrix} 
e_{i}{} \overset{1,1}{\odot}e_{j}{} \overset{1,1}{\odot}\bullet & \tfrac{1}{4}e_{i}{} \overset{1,1}{\odot}e_{j}{} \overset{0,0}{\odot}\bullet \end{bmatrix},\qquad i,j=3,4 \\
\Pop^{\top}_1 \defeq{}& \begin{bmatrix}
e_{3}{} \overset{0,0}{\odot}e_{3}{} \overset{0,0}{\odot}\bullet &
2e_{3}{} \overset{0,0}{\odot}e_{4}{} \overset{0,0}{\odot}\bullet &
e_{4}{} \overset{0,0}{\odot}e_{4}{} \overset{0,0}{\odot}\bullet \\
e_{3}{} \overset{1,1}{\odot}e_{3}{} \overset{0,0}{\odot}\bullet &
2e_{3}{} \overset{1,1}{\odot}e_{4}{} \overset{0,0}{\odot}\bullet &
e_{4}{} \overset{1,1}{\odot}e_{4}{} \overset{0,0}{\odot}\bullet 
\end{bmatrix}. 
\end{align} 
They satisfy
\begin{align}
\Pop^1_\top \Pop^{\top}_1 = \id_\top.
\end{align}
A 1-dimensional subspace $\Vec_a$ of $\Vec'_1$ is defined by the anti-symmetric $1,2$ component of the product of the two vector representations and the corresponding maps are given by
\begin{gather*}
\begin{tikzcd}
	V'_1 = \begin{bmatrix} (1,1) \\ (1,1) \end{bmatrix}
		\ar[shift left]{r}{\Pop^{1'}_a}
	\&
	\begin{bmatrix} (0,0) \end{bmatrix} = V_a
		\ar[shift left]{l}{\Pop^a_{1'}}
\end{tikzcd} ,
\end{gather*}
where
\begin{align} 
\Pop^a_{1'}  \defeq{}& \begin{bmatrix} 
e^{2}{} \overset{0,0}{\odot}\bullet\\
-e^{1}{} \overset{0,0}{\odot}\bullet
\end{bmatrix}, &
\Pop^{1'}_a \defeq{}& \frac{1}{2}\begin{bmatrix} 
e_{2}{} \overset{1,1}{\odot}\bullet &
- e_{1}{} \overset{1,1}{\odot}\bullet
\end{bmatrix}. \label{eq:Pop1paDef}
\end{align}
They satisfy
\begin{align} \label{eq:Ida}
\Pop^{1'}_{a} \Pop^a_{1'} ={}&\id_a .
\end{align}
A 7-dimensional subspace $\Vec_s \subset \Vec'_1$ is defined as the image of $\Pop^1_s$ given by
\begin{gather}
\begin{tikzcd}
	V_1 = \begin{bmatrix} (2,2) \\ (0,0) \end{bmatrix}
		\ar[shift left]{r}{\Pop^1_s}
	\&
	\begin{bmatrix} (1,1) \\ (1,1) \end{bmatrix} = V'_1
		\ar[shift left]{l}{\Pop^s_1}
\end{tikzcd} ,
\end{gather}
where
\begin{align}
\Pop^{s}_1 \defeq{}&\begin{bmatrix} 
e^{1}{} \overset{0,0}{\odot}\bullet & e^{2}{} \overset{0,0}{\odot}\bullet\\
e^{1}{} \overset{1,1}{\odot}\bullet & e^{2}{} \overset{1,1}{\odot}\bullet
\end{bmatrix},
&
\Pop^1_{s} \defeq{}& 
\begin{bmatrix} 
\id_0 - \tfrac{1}{2}\Pop^{\parallel}_0 \Pop^0_{\parallel} & 0\\
0 & \id_0 - \tfrac{1}{2}\Pop^{\parallel}_0 \Pop^0_{\parallel} 
\end{bmatrix}
\begin{bmatrix} 
2e_{1}{} \overset{1,1}{\odot}\bullet & \tfrac{1}{2}e_{1}{} \overset{0,0}{\odot}\bullet\\
2e_{2}{} \overset{1,1}{\odot}\bullet & \tfrac{1}{2}e_{2}{} \overset{0,0}{\odot}\bullet
\end{bmatrix}.
\end{align}
We find
\begin{align} \label{eq:V1decomp}
\Pop^1_s \Pop^\top_1 =
\Pop^{s}_1 \Pop^a_{1'} =
\Pop^1_{\top} \Pop^{s}_1 =
\Pop^{1'}_{a} \Pop^1_{s} = 0,
&&
\Pop^{s}_1 \Pop^1_{s} + \Pop^{\top}_1 \Pop^1_{\top} = \id_1,
&&
\Pop^1_{s} \Pop^{s}_1 +  \Pop^a_{1'} \Pop^{1'}_a = \id_{1'}.
\end{align}
We also have the identity
\begin{align}
\Pop^{s}_1 \Pop^1_{s} \Pop^{s}_1 ={}&\Pop^{s}_1 .
\end{align}

To present the operators of the complex in a compact form, define the intermediate operators
\begin{align}
L_0 \defeq& \begin{bmatrix}
l_0 & \overline{l_0}
\end{bmatrix}, \qquad
\text{ with } 
l_0 \defeq e^{3}{} \overset{0,0}{\odot}\frac{2 \kappa_{1}{}^2 \bar{\kappa}_{1'}{}}{3 \Psi_{2}}\mathcal{K}^2 \mathcal{K}^2,\nonumber\\
L_1 \defeq& \begin{bmatrix}
l_1 & \overline{l_1}
\end{bmatrix}, \qquad
\text{ with } 
l_1 \defeq e_{1}{} \overset{1,1}{\odot}e_{2}{} \overset{1,0}{\odot}\frac{2}{3 \Psi_{2}}\mathcal{K}^1 \mathcal{K}^2, \nonumber\\
\linWeyl \defeq& \begin{bmatrix} 
\vartheta\Psi \\
\overline{\vartheta\Psi}
\end{bmatrix}, \qquad 
\text{ with }
\vartheta\Psi \defeq \begin{bmatrix}
\tfrac{1}{2}\sCurl \sCurl & - \tfrac{3}{32}\Psi_{2}\mathcal{K}^0 \mathcal{K}^0
\end{bmatrix},\nonumber\\
\linWeyl^A \defeq& \begin{bmatrix} 
\vartheta\Psi^{A} \\
\overline{\vartheta\Psi^{A}}
\end{bmatrix}, \qquad 
\text{ with } 
\vartheta\Psi^{A} \defeq \begin{bmatrix} 
- \tfrac{3}{4} \Psi_{2}\mathcal{K}^0 \mathcal{K}^1 e^{1}{} \overset{0,1}{\odot}\bullet & - \tfrac{3}{4} \Psi_{2}\mathcal{K}^0 \mathcal{K}^1 e^{2}{} \overset{0,1}{\odot}\bullet
\end{bmatrix},\nonumber\\
\linWeyl^D \defeq& \begin{bmatrix} 
\vartheta\Psi^{D} \\
\overline{\vartheta\Psi^{D}}
\end{bmatrix}, \qquad 
\text{ with }  \nonumber\\
\vartheta\Psi^{D} \defeq& \begin{bmatrix} 
0 & p^1 & 0 & p^2
\end{bmatrix}, \quad
 p^i \defeq - \tfrac{1}{2}e^{i}{}\overset{0,1}{\odot}\sTwist \bullet - \tfrac{1}{4}e^{1}{}\overset{0,1}{\odot}\FlatCon^{i}{}_{1}{}\overset{0,0}{\odot}\bullet - \tfrac{1}{4}e^{2}{}\overset{0,1}{\odot}\FlatCon^{i}{}_{2}{}\overset{0,0}{\odot}\bullet,
 \label{eq:IntermOps}
\end{align}
defined on the following spaces.
\begin{align}
L_0: W \to V_0, &&
L_1: W \to V_a, &&
\linWeyl: V_1 \to W, &&
\linWeyl^A: V'_1 \to W, &&
\linWeyl^D: V'_2 \to W,
\end{align}
where
\begin{align}
W\defeq \begin{bmatrix} (4,0) \\ (0,4) \end{bmatrix}
\end{align}
is equivalent to the space of 4-tensors with Weyl symmetries. 
\begin{lemma} \label{lem:OpIds}
The operators defined in \eqref{eq:IntermOps} satisfy the following identities
\begin{subequations} 
\begin{align}
L_0 \linWeyl \Kop_0 ={}& \Pop^\perp_0 \Pop^0_\perp, \label{eq:L0WeylK0id}\\
\linWeyl \Pop^s_1 ={}& \linWeyl^A + \linWeyl^D  \Kop'_1, \label{eq:WeylPs1id}\\
L_0 \linWeyl^A ={}& 0, \label{eq:L0WeylAid}\\
\Pop^1_{s} \Pop^{s}_1 -  \Pop^a_{1'} L_1 \mathcal{W}^A ={}&\id_{1'}, \label{eq:P1sPs1id}\\
\Pop^0_\parallel L_0 =&{} 0 \label{eq:P0parL0id}
\end{align}
\end{subequations} 
\end{lemma}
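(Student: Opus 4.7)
The plan is to verify each of the five identities by direct computation from the explicit formulas in \eqref{eq:IntermOps}, \eqref{eq:LinCurvOps}, \eqref{eq:PopDef}, \eqref{eq:Kprojectors}, and \eqref{eq:frame}--\eqref{eq:Pop1paDef}, using four main ingredients: (i) the defining equation $\sTwist\kappa=0$ together with $\xi_{AA'}=\sCurlDagger\kappa$; (ii) the type~D normal form \eqref{eq:TypeDKS}, which forces $\Psi_{ABCD}\propto\Psi_2\kappa_1^{-2}\kappa_{(AB}\kappa_{CD)}$ with coefficient fixed by \eqref{eq:Psi2Covariant}, together with the principal-dyad contractions of $\kappa$ with itself; (iii) the flat-connection conditions \eqref{eq:FlatConnectionConditions}; and (iv) the commutator identities among the fundamental spinor operators collected in \cite[\S2.2]{ABB:symop:2014CQGra..31m5015A}.

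Two of the identities are essentially structural. Identity \eqref{eq:P0parL0id} is immediate from frame duality: every output of $L_0$ is proportional to $e^3$ or $e^4$, while $\Pop^0_\parallel$ contracts with $e_1,e_2$, and $e_i{}^a e^j{}_a=\delta_i^j$ kills the pairing for $i\in\{1,2\}$, $j\in\{3,4\}$. Identity \eqref{eq:P1sPs1id} reduces, by left-multiplying with $\Pop^a_{1'}$ and invoking \eqref{eq:Ida} together with the completeness relation $\Pop^1_s\Pop^s_1+\Pop^a_{1'}\Pop^{1'}_a=\id_{1'}$ from \eqref{eq:V1decomp}, to proving the single identity $L_1\linWeyl^A=-\Pop^{1'}_a$, which is a principal-dyad contraction using $\kappa_{AB}\kappa^{AB}=-2\kappa_1^2$ and the normalization $2/(3\Psi_2)$ in $l_1$. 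For \eqref{eq:L0WeylK0id}, the key input is the gauge covariance $\vartheta\Psi\circ\Kop_0 v=\tfrac12\Lie_v\Psi$ on a vacuum background, after which the double $\mathcal{K}^2$-contraction against the type~D form of $\Psi$ collapses to a scalar proportional to $v^a\nabla_a\kappa_1=-\kappa_1\, U^a v_a$ by \eqref{eq:UU11Def}; the external $e^3$-factor and the prefactor $2\kappa_1^2\bar{\kappa}_{1'}/(3\Psi_2)$ in $l_0$ are calibrated so that this piece plus its complex conjugate reassembles exactly $\Pop^\perp_0\Pop^0_\perp v$. Identity \eqref{eq:L0WeylAid} is a purely algebraic corollary: $\linWeyl^A$ outputs in the image of $\mathcal{K}^0\mathcal{K}^1$, and the composition $\mathcal{K}^2\mathcal{K}^2\mathcal{K}^0\mathcal{K}^1$ vanishes identically in a principal dyad by the algebraic contractions of $\kappa$.

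The main obstacle is \eqref{eq:WeylPs1id}. Expanding $\linWeyl\Pop^s_1$ on an argument $(\alpha_1,\alpha_2)\in\Vec'_1$ produces second-derivative terms $\sCurl\sCurl(e^i\overset{0,0}{\odot}\alpha_i)$ and algebraic terms $\Psi_2\mathcal{K}^0\mathcal{K}^0(e^i\overset{0,0}{\odot}\alpha_i)$. One must commute each covariant derivative past the frame covectors $e^i$, converting the resulting $\sTwist e^i$ into $\FlatCon^i{}_j$-valued contributions, and then split the result into a purely algebraic remainder matching $\linWeyl^A$ plus a derivative remainder that factors as $\linWeyl^D\Kop'_1$. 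The flat-connection identities \eqref{eq:FlatConnectionConditions} are essential here to eliminate second-order $\sCurl\FlatCon$ contributions in favor of $\FlatCon\overset{1,0}{\odot}\FlatCon$, without which the rearrangement into the two target pieces fails. The bookkeeping is intricate because two covariant derivatives must be carried through three algebraic factors while respecting both the $\mathcal{K}$- and $\Pop$-structure; this is the only step for which no structural shortcut is apparent, and a careful spinor-level calculation is required.
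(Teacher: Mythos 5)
Your proposal is correct and follows essentially the same route as the paper's proof: \eqref{eq:P0parL0id} from the frame orthogonality encoded in \eqref{eq:V0decomp}, \eqref{eq:P1sPs1id} reduced to the single contraction identity $L_1\linWeyl^A=-\Pop^{1'}_a$ combined with the decomposition of $\id_{1'}$, \eqref{eq:L0WeylK0id} from the gauge dependence of $\vartheta\Psi_2$ picking out precisely the $e_3$, $e_4$ components of the vector, \eqref{eq:L0WeylAid} from the algebraic vanishing $\mathcal{K}^2\mathcal{K}^2\mathcal{K}^0=0$ on $\SymSpin_{2,0}$ (your $\mathcal{K}^2\mathcal{K}^2\mathcal{K}^0\mathcal{K}^1=0$ is the same fact), and \eqref{eq:WeylPs1id} by commuting $\Pop^s_1$ through $\linWeyl$ and splitting the result into the algebraic remainder $\linWeyl^A$ and a part factoring through $\Kop'_1$. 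The level of detail you leave to spinor computation matches what the paper itself leaves implicit, so no genuine gap is introduced.
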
 
\begin{proof}
The operator $L_0 \linWeyl \Kop_0$ applied to a vector yields the gauge dependence of $\vartheta\Psi_2$ multiplied by $e^{3}{} \overset{0,0}{\odot}\frac{2 \kappa_{1}{}^2 \bar{\kappa}_{1'}{}}{3 \Psi_{2}}$ and its complex conjugate. The gauge dependence is given by the 3 and 4 components of the vector and hence given by the right hand side of \eqref{eq:L0WeylK0id}. For \eqref{eq:WeylPs1id}, the projector $\Pop^s_1$ is commuted through the linearized Weyl operator $\linWeyl$. Part of it factors through $\Kop'_1$ and $\linWeyl^D$ is defined as the operator acting on it. The algebraic remainder is collected in $\linWeyl^A$. On $\SymSpin_{2,0}$ we have $ \mathcal{K}^2 \mathcal{K}^2 \mathcal{K}^0 = 0$. This together with the
complex conjugate version on $\SymSpin_{0,2}$ gives \eqref{eq:L0WeylAid}. Commuting $\mathcal{K}$ operators shows $L_1 \mathcal{W}^A = -\Pop^{1'}_a$ with $\Pop^{1'}_a$ given in \eqref{eq:Pop1paDef}. Then \eqref{eq:P1sPs1id} is the decomposition of $\id_{1'}$, given in \eqref{eq:V1decomp}. \eqref{eq:P0parL0id} follows directly from \eqref{eq:V0decomp}.
\end{proof}

Now we are prepared to define the remaining operators for the complex \eqref{diag:DiffComplex4D}. 
\begin{definition}
The operators in the first square of \eqref{diag:DiffComplex4D} are defined by
\begin{subequations} \label{eq:ComplexOps}
\begin{align}\label{eq:0ops}
\Cop_0 \defeq{}&\Pop^0_{\parallel}, &
\Dop_0 \defeq{}&\Pop^{\parallel}_0, &
\Kop_0 \defeq&\begin{bmatrix}
\sTwist \\
\sDiv 
\end{bmatrix}, &
\Hop_0 \defeq{}&  L_0 \linWeyl , &
\Hop'_0 \defeq{}&0.
\end{align}
The operators in the second square are defined by
\begin{align} \label{eq:1aops}
\Cop_1 \defeq{}& \left( \Pop^1_{s} -  \Pop^a_{1'} L_1 \linWeyl \right) \left( \id_1 -  \Kop_0 \Hop_0 \right), &
\Dop_1 \defeq{}& \Pop^{s}_1, &
\Kop_1 \defeq&
\begin{bmatrix}
\Kop'_1 \Cop_1 \\
\Pop^1_\top (\id_1 -  \Kop_0 \Hop_0)
 \end{bmatrix}, 
\end{align} 
\begin{align} \label{eq:1bops}
\Hop'_1 \defeq{}& \left( \Pop^1_{s} -  \Pop^a_{1'} L_1 \linWeyl \right) \Kop_0 L_0 \linWeyl^{D}  + \Pop^a_{1'} L_1 \linWeyl^{D}, &
\Hop_1 \defeq&\begin{bmatrix}
  0 & \Pop^{\top}_1
 \end{bmatrix}. &&
\end{align}
The operators in the third square are defined by
\begin{align}  \label{eq:2aops}
\Cop_2 \defeq&\begin{bmatrix}
\id_{2'} & 0
\end{bmatrix}, &
\Dop_2 \defeq& \begin{bmatrix}
\id_{2'} -  \Kop'_1 \Hop'_1 \\
- \Pop^1_\top \Kop_0 L_0 \linWeyl^{D}
\end{bmatrix}, &
\Kop_2 \defeq& \begin{bmatrix}
\Kop'_2 \Cop_2 \\
\Pop^{1'}_a  (\Hop'_1 \Cop_2 - \Cop_1 \Hop_1)\\
\Pop^0_\perp (\Hop_0 \Hop_1 +  L_0 \linWeyl^{D} \Cop_2)
\end{bmatrix},
\\
\Hop'_2 \defeq{}&0, & 
\Hop_2 \defeq& \begin{bmatrix}
0 &  \Kop'_1 \Pop^a_{1'}  & \Kop'_1 \Pop^1_s \Kop_0 \Pop^\perp_0 \\
0 & 0 & \Pop^1_\top \Kop_0 \Pop^\perp_0
\end{bmatrix}. &&\label{eq:2bops}
\end{align}
The operators in the fourth square are defined by
\begin{align} \label{eq:3ops}
\Cop_3 \defeq&\begin{bmatrix} 
\id_{3'} & 0 & 0
\end{bmatrix}, &
\Dop_3 \defeq&\begin{bmatrix} 
\id_{3'} \\
0 \\
0
\end{bmatrix}, &
\Kop_3 \defeq{}&\Kop'_3 \Cop_3 , & 
\Hop'_3 \defeq{}&0, &
\Hop_3 \defeq{}&0.
\end{align}
The operator between $V_4$ and $V'_4$ are defined by
\begin{align} \label{eq:4ops}
\Cop_4 \defeq{}& \id_{4'}, &
\Dop_4 \defeq{}& \id_{4'}.
\end{align}
\end{subequations}
\end{definition}

We are now ready to state and prove our main result.

\begin{theorem} \label{thm:kerr-compat}
The differential operators $\Kop_l$, $l=0,1,2,3$, defined in \eqref{eq:ComplexOps} constitute a compatibility complex for the Killing operator $\Kop = \Kop_0$ on the Kerr spacetime. The $\Kop_l$ complex is also locally exact.
\end{theorem}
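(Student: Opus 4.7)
My plan is to prove the theorem by direct application of Proposition~\ref{prp:compat-sufficient}. The auxiliary bottom complex is $\Kop'_l = \d^\DD_l$, the twisted de~Rham complex of the flat connection $\DD$ from \eqref{eq:DDdef}, which by Proposition~\ref{prp:de-rham} is both a compatibility complex for $\Kop'_0 = \DD$ and locally exact. The data $(\Cop_l, \Dop_l, \Hop_l, \Hop'_l)$ displayed in \eqref{eq:ComplexOps} then must be verified to constitute an equivalence up to homotopy between $\Kop_l$ and $\Kop'_l$ in the sense of Definition~\ref{def:homalg}, with end-cap operators $\tilde\Hop_{-1} = \tilde\Hop'_{-1} = \tilde\Hop_{4} = \tilde\Hop'_{4} = 0$. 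Once this is done, Proposition~\ref{prp:compat-sufficient} simultaneously yields both assertions of Theorem~\ref{thm:kerr-compat}.

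Concretely, the proof reduces to a finite checklist of operator identities, to be verified square by square: (i) the complex conditions $\Kop_{l+1}\Kop_l = 0$ for $l=0,1,2$, needed to ensure that $\Kop_l$ really is a cochain complex; (ii) the cochain map conditions $\Kop'_l\Cop_l = \Cop_{l+1}\Kop_l$ and $\Kop_l \Dop_l = \Dop_{l+1} \Kop'_l$ at each level; and (iii) the two homotopy identities
\begin{align*}
\Dop_l \Cop_l &= \id - \Kop_{l-1}\Hop_{l-1} - \Hop_l \Kop_l, & \Cop_l \Dop_l &= \id - \Kop'_{l-1}\Hop'_{l-1} - \Hop'_l \Kop'_l,
\end{align*}
with the prescribed end-case modifications at $l=0$ and $l=4$. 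The tools available are Lemma~\ref{lem:OpIds}, the decomposition identities \eqref{eq:V0decomp}, \eqref{eq:V1decomp}, and \eqref{eq:Ida} for the projection operators, the flatness conditions \eqref{eq:FlatConnectionConditions} for $\FlatCon^i{}_j$, and the commutator formulas for the fundamental operators $\sDiv, \sCurl, \sCurlDagger, \sTwist$ from \cite{ABB:symop:2014CQGra..31m5015A}.

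The first square is essentially Lemma~\ref{lem:OpIds}: one has $\Dop_0 \Cop_0 = \Pop^\parallel_0 \Pop^0_\parallel = \id_0 - \Pop^\perp_0\Pop^0_\perp$ by \eqref{eq:V0decomp}, and the latter equals $\id_0 - \Hop_0\Kop_0$ by \eqref{eq:L0WeylK0id}, while $\Cop_0 \Dop_0 = \Pop^0_\parallel \Pop^\parallel_0 = \id_\parallel$ from \eqref{eq:V0decomp} is consistent with $\Hop'_0 = 0$. The cochain identity $\Kop'_0 \Cop_0 = \Cop_1 \Kop_0$ expresses that, in the $(\xi, \zeta)$-frame, the Killing operator restricted to the gauge space acts as the flat connection $\DD$, and falls out of the explicit form of $\Kop'_0$ in \eqref{eq:Kpops} together with \eqref{eq:P0parL0id} and \eqref{eq:L0WeylK0id}. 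At the second square, the homotopy identities are assembled from \eqref{eq:V1decomp} and \eqref{eq:P1sPs1id}, while the cochain condition $\Kop'_1 \Cop_1 = \Cop_2 \Kop_1$ is essentially read off the top block of $\Kop_1$ in \eqref{eq:1aops}. The squares at $l=3,4$ are straightforward since $\Cop_3, \Dop_3, \Cop_4, \Dop_4$ are block inclusions or projections and the relevant homotopies $\Hop'_3, \Hop'_2, \Hop_3$ vanish.

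The main obstacle, requiring the most care, is the square at $l=2$. Here the cochain map condition $\Kop_2 \Dop_2 = \Dop_3 \Kop'_2$ expands into a nontrivial identity that mixes the linearized Weyl operator pieces $\linWeyl^A, \linWeyl^D$ with the connection $\FlatCon^i{}_j$ and the Killing operator through the terms $\Kop_0 L_0 \linWeyl^D$ appearing in $\Hop'_1$ and $\Dop_2$. Closing the cancellation forces a careful use of \eqref{eq:WeylPs1id} to split $\linWeyl \Pop^s_1$ into its algebraic piece $\linWeyl^A$ (killed by $L_0$ through \eqref{eq:L0WeylAid}) and its derivative piece $\linWeyl^D \Kop'_1$, combined with commutator identities from \cite{ABB:symop:2014CQGra..31m5015A} to move $\sCurl, \sCurlDagger, \sDiv$ past $\FlatCon^i{}_j$ and the $\mathcal{K}^i$ projectors, and the vacuum specialization of the Bianchi identity \eqref{eq:Bianchi}. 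Once these identities close, Propositions~\ref{prp:de-rham} and~\ref{prp:compat-sufficient} deliver Theorem~\ref{thm:kerr-compat} in one stroke.
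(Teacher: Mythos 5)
Your proposal follows essentially the same route as the paper: exhibit the data \eqref{eq:ComplexOps} as an equivalence up to homotopy between $\Kop_l$ and the twisted de~Rham complex $\Kop'_l$, then invoke Propositions~\ref{prp:de-rham} and~\ref{prp:compat-sufficient}, with the actual content being the square-by-square verification of the cochain-map, homotopy, and $\Kop_{l+1}\Kop_l=0$ identities via Lemma~\ref{lem:OpIds}, \eqref{eq:V0decomp}, \eqref{eq:V1decomp}, \eqref{eq:Ida} and \eqref{eq:KpIds}. The only cosmetic difference is that in the paper the spinor-level input (commutators, Bianchi-type relations) is entirely encapsulated in Lemma~\ref{lem:OpIds} and \eqref{eq:FlatConnectionConditions}, so the checks at the $l=2$ square, including $\Kop_2\Dop_2=\Dop_3\Kop'_2$, reduce to purely formal manipulations of previously established identities rather than requiring fresh use of the fundamental-operator commutators or \eqref{eq:Bianchi}.
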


\begin{proof}
The operators
defined in \eqref{eq:ComplexOps} constitute an equivalence up to homotopy of the complex $\Kop_l$ with the auxiliary complex $\Kop'_l$, which is the twisted de~Rham complex \eqref{eq:Kpops}, c.f. Definition~\ref{def:homalg} and \ref{def:flat-conn}. All the relevant identities follow from repeated application of projection identities and Lemma~\ref{lem:OpIds} as shown below. Due to Proposition~\ref{prp:de-rham}, $\Kop'_l$ is known both to be a compatibility complex and to be locally exact. Hence both of these properties also hold for the complex $\Kop_l$ by Proposition~\ref{prp:compat-sufficient}. The compositions of $\Kop'$ operators yield
\begin{align} \label{eq:KpIds}
\Kop'_1 \Kop'_0 = 0, &&
\Kop'_2 \Kop'_1 = 0, &&
\Kop'_3 \Kop'_2 = 0,
\end{align}
due to commutators and \eqref{eq:FlatConnectionConditions}. Next we present explicit derivations of all  required operator identities.
\renewcommand{\labelenumi}{\theenumi.}
\begin{enumerate}
\item \label{enu:id1} $\id_0 = \Dop_0 \Cop_0 + \Hop_0 \Kop_0$: 
\begin{align*}
\Dop_0 \Cop_0 + \Hop_0 \Kop_0 ={}& \Pop^{\parallel}_0 \Pop^0_{\parallel} + L_0 \linWeyl \Kop_0
=
 \Pop^{\parallel}_0 \Pop^0_{\parallel}  + \Pop^{\perp}_0 \Pop^0_{\perp}
= \id_0  \tag*{\text{ by (\ref{eq:0ops},\ref{eq:L0WeylK0id},\ref{eq:V0decomp})}}
\end{align*}
\item \label{enu:id2} $\id_{0'} = \Cop_0 \Dop_0 + \Hop'_0 \Kop'_0$:
\begin{align*}
\Cop_0 \Dop_0 + \Hop'_0 \Kop'_0 ={}&  \Pop^0_{\parallel} \Pop^{\parallel}_0 
=
\id_{0'} \tag*{ \text{ by (\ref{eq:0ops},\ref{eq:V0decomp})}}
\end{align*} 
\item \label{enu:id3} $\Kop_0 \Dop_0 = \Dop_1 \Kop'_0$:
\begin{align*}
\Kop_0 \Dop_0 ={}& \begin{bmatrix}
\sTwist ( e^{1}{} \overset{0,0}{\odot}\bullet ) &
\sTwist ( e^{2}{} \overset{0,0}{\odot}\bullet ) \\
\sDiv ( e^{1}{} \overset{0,0}{\odot}\bullet ) &
\sDiv ( e^{2}{} \overset{0,0}{\odot}\bullet )
\end{bmatrix} \\
={}& \begin{bmatrix}
(\sTwist e^{1}) \overset{0,0}{\odot}\bullet + e^{1}  \overset{0,0}{\odot}(\sTwist \bullet) &
(\sTwist e^{2} ) \overset{0,0}{\odot}\bullet  + e^{2}{} \overset{0,0}{\odot}(\sTwist \bullet )\\
(\sDiv  e^{1}) \overset{0,0}{\odot}\bullet  + e^{1} \overset{1,1}{\odot}(\sTwist \bullet ) &
(\sDiv e^{2} ) \overset{0,0}{\odot}\bullet  + e^{2}  \overset{1,1}{\odot}(\sTwist \bullet )
\end{bmatrix}
\\
={}& \Dop_1 \Kop'_0
\end{align*} 
In the last step we used $\sTwist e^i = -\sum_{l=1}^2  e^l \overset{0,0}{\odot} \FlatCon^i{}_l$, $\sDiv e^i = 0$ and $e^i \overset{1,1}{\odot} \FlatCon^j{}_k = 0$ for $i,j,k = 1,2$ and the definition \eqref{eq:Kpops}.
\item \label{enu:id4} $\Cop_1 \Kop_0 = \Kop'_0 \Cop_0$:
\begin{align*}
\Cop_1 \Kop_0 ={}& \left( \mathsf{P}^1_{s} -  \Pop^a_{1'} L_1 \linWeyl \right) \left( \id_1 -  \Kop_0 \Hop_0 \right)\Kop_0 
 \tag*{ \text{ by \eqref{eq:1aops}}} \\
={}& \left( \mathsf{P}^1_{s} -  \Pop^a_{1'} L_1 \linWeyl \right) \Kop_0 \left( \id_0 - \Hop_0 \Kop_0 \right) &\\
={}& \left( \mathsf{P}^1_{s} -  \Pop^a_{1'} L_1 \linWeyl \right) \Kop_0 \Dop_0 \Cop_0 
 \tag*{ \text{ by \ref{enu:id1}.}} \\
={}& \left( \mathsf{P}^1_{s} -  \Pop^a_{1'} L_1 \linWeyl \right) \Pop^s_1 \Kop'_0 \Cop_0 
 \tag*{ \text{ by \ref{enu:id3}., \eqref{eq:1aops}}} \\
={}& \left( \mathsf{P}^1_{s} \Pop^s_1 -  \Pop^a_{1'} L_1 \linWeyl^A -  \Pop^a_{1'} L_1 \linWeyl^D  \Kop'_1 \right) \Kop'_0 \Cop_0 
\tag*{ \text{ by \eqref{eq:WeylPs1id}}} \\
={}&  \Kop'_0 \Cop_0
\tag*{ \text{ by \eqref{eq:P1sPs1id}, \eqref{eq:KpIds}}}
\end{align*} 
\item \label{enu:id5} $\id_{1'} = \Cop_1 \Dop_1 + \Hop'_1 \Kop'_1 + \Kop'_0 \Hop'_0$:
\begin{align*}
\Cop_1 \Dop_1 + \Hop'_1 \Kop'_1 + \Kop'_0 \Hop'_0 ={}&  \left( \mathsf{P}^1_{s} -  \Pop^a_{1'} L_1 \linWeyl \right) \left( \id_1 -  \Kop_0 L_0 \linWeyl \right) \Pop^s_1  &\\
&+ \left( \mathsf{P}^1_{s} -  \Pop^a_{1'} L_1 \linWeyl \right) \Kop_0 L_0 \linWeyl^{D} \Kop'_1 + \Pop^a_{1'} L_1 \linWeyl^{D}\Kop'_1 
\tag*{ \text{ by \eqref{eq:ComplexOps}}} \\
={}&  \mathsf{P}^1_{s}  \mathsf{P}^s_{1} -  \Pop^a_{1'} L_1 \linWeyl^A 
\tag*{ \text{ by \eqref{eq:WeylPs1id},\eqref{eq:L0WeylAid}}} \\
={}& \id_{1'}
\tag*{ \text{ by \eqref{eq:P1sPs1id}}}
\end{align*}
\item \label{enu:id6} $\id_1 = \Dop_1 \Cop_1 + \Hop_1 \Kop_1 + \Kop_0 \Hop_0$:
\begin{align*}
\Dop_1 \Cop_1 + \Hop_1 \Kop_1 + \Kop_0 \Hop_0 ={}& \Pop^s_1  \left( \mathsf{P}^1_{s} -  \Pop^a_{1'} L_1 \linWeyl \right) \left( \id_1 -  \Kop_0 \Hop_0 \right) & \\
& + \Pop^\top_1 \Pop^1_\top \left( \id_1 - \Kop_0 \Hop_0 \right) + \Kop_0 \Hop_0
\tag*{ \text{ by \eqref{eq:1aops},\eqref{eq:1bops}}} \\
={}& \id_1 
\tag*{ \text{ by \eqref{eq:V1decomp}}} 
\end{align*} 
\item \label{enu:id7} $\Kop'_1 \Cop_1 = \Cop_2 \Kop_1$ \hfill  by \eqref{eq:1aops},\eqref{eq:2aops}
\item \label{enu:id8} $\Kop_1 \Dop_1 = \Dop_2 \Kop'_1$:
\begin{align*}
 \Dop_2 \Kop'_1 ={}& \begin{bmatrix}
 \Kop'_1(\id_{1'} -   \Hop'_1 \Kop'_1) \\
 - \Pop^1_\top \Kop_0 L_0 \linWeyl^{D}\Kop'_1
 \end{bmatrix}
 \tag*{ \text{ by \eqref{eq:2aops}}} \\
 ={}& \begin{bmatrix}
  \Kop'_1 \Cop_1 \Dop_1 \\
  - \Pop^1_\top \Kop_0 L_0 \linWeyl \Dop_1
  \end{bmatrix}
  \tag*{ \text{ by \ref{enu:id5}., \eqref{eq:WeylPs1id}, \eqref{eq:L0WeylAid}, \eqref{eq:1aops}}} \\
  ={}& \Kop_1 \Dop_1 
  \tag*{ \text{ by \eqref{eq:1aops}, \eqref{eq:V1decomp}, \eqref{eq:0ops}}}
\end{align*} 
\item \label{enu:id9} $\id_{2'} = \Cop_2 \Dop_2 + \Hop'_2 \Kop'_2 + \Kop'_1 \Hop'_1$ \hfill by \eqref{eq:2aops},\eqref{eq:2bops}
\item \label{enu:id10} $\id_2 = \Dop_2 \Cop_2 + \Hop_2 \Kop_2 + \Kop_1 \Hop_1$:
\begin{align*}
\Dop_2 \Cop_2 + \Kop_1 \Hop_1 + \Hop_2 \Kop_2  ={}&  
\begin{bmatrix}
(\id_{2'} -  \Kop'_1 \Hop'_1) \Cop_2 \\
- \Pop^1_\top \Kop_0 L_0 \linWeyl^{D} \Cop_2
\end{bmatrix} +
\begin{bmatrix}
\Kop'_1 \Cop_1 \Hop_1 \\
\Pop^1_\top(\id_1 - \Kop_0 \Hop_0) \Hop_1
\end{bmatrix}\\
& + 
\begin{bmatrix}
\Kop'_1 \Pop^a_{1'} \Pop^{1'}_a (\Hop'_1 \Cop_2 - \Cop_1 \Hop_1) + \Kop'_1 \Pop^1_s \Kop_0 \Pop^\perp_0 \Pop^0_\perp (\Hop_0 \Hop_1 + L_0 \linWeyl^D \Cop_2) \\
\Pop^1_\top \Kop_0 \Pop^\perp_0 \Pop^0_\perp (\Hop_0 \Hop_1 + L_0 \linWeyl^D \Cop_2)
\end{bmatrix} \\
={}& 
\begin{bmatrix}
\Kop'_1 (\Pop^a_{1'} \Pop^{1'}_a - \id_{1'}) (\Hop'_1 \Cop_2 - \Cop_1 \Hop_1) + \Kop'_1 \Pop^1_s \Kop_0 \Pop^\perp_0 \Pop^0_\perp (\Hop_0 \Hop_1 + L_0 \linWeyl^D \Cop_2) + \Cop_2\\
\Pop^1_\top \Kop_0 (\Pop^\perp_0 \Pop^0_\perp - \id_0) (\Hop_0 \Hop_1 + L_0 \linWeyl^D \Cop_2) + \Pop^1_\top \Hop_1
\end{bmatrix} \\
={}& 
\begin{bmatrix}
\Kop'_1 \Pop^1_s ( \Kop_0 L_0 \linWeyl^D \Cop_2 + \Hop_1 - \Kop_0 \Hop_0 \Hop_1 ) + \Kop'_1 \Pop^1_s \Kop_0 (\Hop_0 \Hop_1 + L_0 \linWeyl^D \Cop_2) + \Cop_2\\
 \Pop^1_\top \Hop_1
\end{bmatrix} \\
={}&
\begin{bmatrix}
- \Kop'_1 \Pop^1_s \Hop_1  + \Cop_2\\
 \Pop^1_\top \Hop_1
\end{bmatrix} =
\id_2
\end{align*}
Here we used \eqref{eq:ComplexOps}, \eqref{eq:V0decomp}, \eqref{eq:P0parL0id}, \eqref{eq:V1decomp}.
\item \label{enu:id11} $\Kop_2 \Dop_2 = \Dop_3 \Kop'_2$:
\begin{align*}
\Kop_2 \Dop_2 ={}& \begin{bmatrix}
\Kop'_2 \Cop_2 \Dop_2\\
\Pop^{1'}_a  (\Hop'_1 \Cop_2 \Dop_2 - \Cop_1 \Hop_1 \Dop_2)\\
\Pop^0_\perp (\Hop_0 \Hop_1 \Dop_2 +  L_0 \linWeyl^{D} \Cop_2 \Dop_2)
\end{bmatrix}
\tag*{ \text{ by \eqref{eq:2aops}}} \\
={}& \begin{bmatrix}
\Kop'_2 (\id_{2'} - \Kop'_1 \Hop'_1)\\
\Pop^{1'}_a  (\Hop'_1 (\id_{2'} - \Kop'_1 \Hop'_1) - \Cop_1 \Hop_1 \Dop_2)\\
\Pop^0_\perp (\Hop_0 \Hop_1 \Dop_2 +  L_0 \linWeyl^{D} (\id_{2'} - \Kop'_1 \Hop'_1))
\end{bmatrix}
\tag*{ \text{ by \ref{enu:id9}.}} \\
={}& \begin{bmatrix}
\Kop'_2\\
\Pop^{1'}_a  ((\id_{1'} - \Hop'_1 \Kop'_1) \Hop'_1 + \Cop_1 \Pop^\top_1 \Pop^1_\top \Kop_0 L_0 \linWeyl^D)\\
\Pop^0_\perp (-\Hop_0 \Pop^\top_1 \Pop^1_\top \Kop_0 L_0 \linWeyl^D +  L_0 \linWeyl^{D}  - L_0 \linWeyl^{D} \Kop'_1 \Hop'_1)
\end{bmatrix}
\tag*{ \text{ by \eqref{eq:KpIds}, \eqref{eq:1bops}, \eqref{eq:2aops}}} \\
={}& \begin{bmatrix}
\Kop'_2\\
\Pop^{1'}_a  ( \Cop_1 \Dop_1 \Hop'_1 + \Cop_1 \Pop^\top_1 \Pop^1_\top \Kop_0 L_0 \linWeyl^D)\\
\Pop^0_\perp (-\Hop_0 \Pop^\top_1 \Pop^1_\top \Kop_0 L_0 \linWeyl^D +  L_0 \linWeyl^{D}  - L_0 \linWeyl \Pop^s_1 \Hop'_1)
\end{bmatrix}
\tag*{ \text{ by \ref{enu:id5}., \eqref{eq:WeylPs1id}, \eqref{eq:L0WeylAid}}} \\
={}& \begin{bmatrix}
\Kop'_2\\
\Pop^{1'}_a \Cop_1 ( \Pop^s_1 \Pop^1_s  + \Pop^\top_1 \Pop^1_\top )\Kop_0 L_0 \linWeyl^D\\
\Pop^0_\perp (-\Hop_0 \Pop^\top_1 \Pop^1_\top \Kop_0 L_0 \linWeyl^D +  L_0 \linWeyl^{D}  - L_0 \linWeyl \Pop^s_1 \Pop^1_s \Kop_0 L_0 \linWeyl^D)
\end{bmatrix}
\tag*{ \text{ by \eqref{eq:1bops}, \eqref{eq:V1decomp}}} \\
={}& \begin{bmatrix}
\Kop'_2\\
\Pop^{1'}_a \Cop_1 \Kop_0 L_0 \linWeyl^D\\
\Pop^0_\perp (-\Hop_0 \Kop_0 L_0 \linWeyl^D +  L_0 \linWeyl^{D})
\end{bmatrix}
\tag*{ \text{ by \eqref{eq:V1decomp}, \eqref{eq:0ops}}} \\
={}& \begin{bmatrix}
\Kop'_2\\
\Pop^{1'}_a \Kop'_0 \Cop_0 L_0 \linWeyl^D\\
\Pop^0_\perp \Dop_0 \Cop_0 L_0 \linWeyl^D 
\end{bmatrix} =
\begin{bmatrix}
\Kop'_2\\
0\\
0 
\end{bmatrix} =
\Dop_3 \Kop'_2
\tag*{ \text{ by \ref{enu:id1}., \ref{enu:id4}., \eqref{eq:P0parL0id}, \eqref{eq:0ops}}}
\end{align*}
\item \label{enu:id12} $\Kop'_2 \Cop_2 = \Cop_3 \Kop_2$ \hfill by \eqref{eq:2aops}, \eqref{eq:3ops}
\item \label{enu:id13} $\id_{3'} = \Cop_3 \Dop_3 + \Hop'_3 \Kop'_3 + \Kop'_2 \Hop'_2$ \hfill by \eqref{eq:2bops}, \eqref{eq:3ops}
\item \label{enu:id14} $\id_3 = \Dop_3 \Cop_3 + \Hop_3 \Kop_3 + \Kop_2 \Hop_2$:
\begin{align*}
 \Dop_3 \Cop_3 + \Kop_2 \Hop_2 ={}& 
 \begin{bmatrix}
 \id_{3'} & 0 & 0 \\ 0 & 0 & 0 \\ 0 & 0 & 0
 \end{bmatrix} + \begin{bmatrix}
 \Kop'_2 \Cop_2 \Hop_2 \\
 \Pop^{1'}_a  (\Hop'_1 \Cop_2 \Hop_2 - \Cop_1 \Hop_1 \Hop_2)\\
 \Pop^0_\perp (\Hop_0 \Hop_1 \Hop_2 +  L_0 \linWeyl^{D} \Cop_2 \Hop_2)
 \end{bmatrix}
 \tag*{ \text{ by \eqref{eq:2aops}, \eqref{eq:3ops}}} \\
 ={}& \begin{bmatrix}
  \id_{3'} & \Kop'_2 \Kop'_1  \Pop^a_{1'} & \Kop'_2 \Kop'_1 \Pop^1_s \Kop_0 \Pop^{\perp}_0 \\ 
  0 &  \Pop^{1'}_a \Hop'_1 \Kop'_1  \Pop^a_{1'} & \Pop^{1'}_a (\Hop'_1 \Kop'_1 \Pop^1_s \Kop_0 \Pop^{\perp}_0  - \Cop_1 \Pop^\top_1 \Pop^1_\top \Kop_0 \Pop^\perp_0) \\
  0 &  \Pop^0_\perp L_0 \linWeyl^{D} \Kop'_1  \Pop^a_{1'} & \Pop^0_\perp (L_0 \linWeyl^{D} \Kop'_1 \Pop^1_s \Kop_0 \Pop^{\perp}_0 + \Hop_0 \Pop^\top_1 \Pop^1_\top \Kop_0 \Pop^\perp_0)\\
  \end{bmatrix}
   \tag*{ \text{ by \eqref{eq:ComplexOps}}} \\
 ={}& \begin{bmatrix}
   \id_{3'} & 0 & 0 \\ 
   0 &  \Pop^{1'}_a (\id_{1'} - \Cop_1 \Dop_1)  \Pop^a_{1'} & \Pop^{1'}_a ( (\id_{1'} - \Cop_1 \Dop_1) \Pop^1_s   - \Cop_1 \Pop^\top_1 \Pop^1_\top ) \Kop_0 \Pop^\perp_0 \\
   0 &  \Pop^0_\perp L_0 \linWeyl \Pop^s_1  \Pop^a_{1'} & \Pop^0_\perp (L_0 \linWeyl \Pop^s_1 \Pop^1_s + \Hop_0 \Pop^\top_1 \Pop^1_\top ) \Kop_0 \Pop^\perp_0\\
   \end{bmatrix}  
    \tag*{ \text{ by \eqref{eq:KpIds}, \ref{enu:id5}, \eqref{eq:WeylPs1id}}} \\
  ={}& \begin{bmatrix}
    \id_{3'} & 0 & 0 \\ 
    0 & \id_a & - \Pop^{1'}_a \Cop_1 \Kop_0 \Pop^\perp_0 \\
    0 &  0 & \Pop^0_\perp \Hop_0 \Kop_0 \Pop^\perp_0\\
    \end{bmatrix} =
 \begin{bmatrix}
     \id_{3'} & 0 & 0 \\ 
     0 & \id_a & - \Pop^{1'}_a \Kop'_0 \Cop_0 \Pop^\perp_0 \\
     0 &  0 & \Pop^0_\perp (\id_0 -\Dop_0 \Cop_0) \Pop^\perp_0\\
     \end{bmatrix}  = \id_3   
     \tag*{ \text{ by \eqref{eq:ComplexOps}, \eqref{eq:V1decomp}, \eqref{eq:Ida}, \ref{enu:id4}., \ref{enu:id1}., \eqref{eq:V0decomp} }}
\end{align*}
\item \label{enu:id15} $\Kop_3 \Dop_3 = \Dop_4 \Kop'_3$ \hfill by \eqref{eq:3ops}, \eqref{eq:4ops}
\item \label{enu:id16} $\Kop'_3 \Cop_3 = \Cop_4 \Kop_3$ \hfill by \eqref{eq:3ops}, \eqref{eq:4ops}
\end{enumerate}

Compositions of $\Kop$ operators yield
\begin{subequations} 
\begin{align*} 
\Kop_1 \Kop_0 ={}& \begin{bmatrix}
\Kop'_1 \Cop_1 \Kop_0 \\
\Pop^1_\top (\id_1 - \Kop_0 \Hop_0) \Kop_0
\end{bmatrix} = 
 \begin{bmatrix}
\Kop'_1 \Kop'_0 \Cop_0 \\
\Pop^1_\top \Kop_0 \Dop_0 \Cop_0
\end{bmatrix}= 
 \begin{bmatrix}
0 \\
\Pop^1_\top \Pop^s_1 \Kop'_0 \Cop_0
\end{bmatrix} = 0, \\
\Kop_2 \Kop_1 ={}&  \begin{bmatrix}
\Kop'_2 \Cop_2 \Kop_1 \\
\Pop^{1'}_a  (\Hop'_1 \Cop_2 \Kop_1 - \Cop_1 \Hop_1 \Kop_1)\\
\Pop^0_\perp (\Hop_0 \Hop_1 \Kop_1 +  L_0 \linWeyl^{D} \Cop_2 \Kop_1)
\end{bmatrix} \nonumber \\
={}&
 \begin{bmatrix}
\Kop'_2 \Kop'_1 \Cop_1 \\
\Pop^{1'}_a  (\Hop'_1 \Kop'_1 \Cop_1 - \Cop_1 \Hop_1 \Kop_1)\\
\Pop^0_\perp (\Hop_0 \Hop_1 \Kop_1 +  L_0 \linWeyl^{D} \Kop'_1 \Cop_1)
\end{bmatrix} \nonumber \\
={}&
 \begin{bmatrix}
0 \\
\Pop^{1'}_a  \Cop_1( \id_1 - \Dop_1 \Cop_1 - \Hop_1 \Kop_1)\\
\Pop^0_\perp \Hop_0 (\Hop_1 \Kop_1 + \Dop_1 \Cop_1)
\end{bmatrix} \nonumber \\
={}&
 \begin{bmatrix}
0 \\
\Pop^{1'}_a  \Cop_1 \Kop_0 \Hop_0\\
-\Pop^0_\perp \Hop_0 (\Kop_0 \Hop_0 - \id_1)
\end{bmatrix}
=
 \begin{bmatrix}
0 \\
\Pop^{1'}_a \Kop'_0 \Cop_0 \Hop_0\\
\Pop^0_\perp \Dop_0 \Cop_0 \Hop_0 
\end{bmatrix} = 0, \\
\Kop_3 \Kop_2 ={}& \Kop'_3 \Cop_3 \Kop_2 =  \Kop'_3 \Kop'_2 \Cop_2 = 0. 
\end{align*}
\end{subequations} 
\end{proof}

\section{Equivalence of invariants} \label{sec:equiv}

Here we discuss the equivalence of $\Kop_1$ and the operators appearing in the set of invariants of \cite{ab-kerr} which we denote collectively by $\widetilde \Kop_1$. Recall that $\widetilde \Kop_1$ consists of the quantities\footnote{There is a typo in the GHP form of $\mathbb{I}_\zeta$ given in equation (26) of \cite{ab-kerr}. In two instances the factor $p_+ p_-$ should be replaced by $(p^2 + \bar{p}^2)$.}
\begin{subequations} \label{eq:PRLinv}
\begin{align} 
\vartheta \Phi &: V_1 \to (2,2), \\
\vartheta \Lambda &: V_1 \to (0,0), \\
\mathcal{P}^{2} \vartheta \Psi &: V_1 \to (4,0), \\
\mathbb{I}_V &: V_1 \to (0,0), \qquad  \text{ for } V \in \{\xi,\zeta\}.
\end{align}
\end{subequations} 
The explicit forms of the operators are given in \eqref{eq:LinCurvOps} and by
\begin{subequations} 
\begin{align}
\mathcal{P}^{2} \vartheta \Psi \defeq{}& \begin{bmatrix} 
\tfrac{1}{2}\mathcal{P}^{2} \sCurl \sCurl
&
0
\end{bmatrix}, \\
\InvSymb_V
\defeq{}&
\InvSymb_V^{D}\linWeyl +  \InvSymb_V^{A}, \label{eq:IIIvDef1}
\end{align}
\end{subequations} 
with $\mathcal{P}^2$ given in \eqref{eq:P2Def} and
\begin{subequations}
\begin{align}
\InvSymb_V^{A}\defeq{}& \tfrac{81}{2} \Psi_{2} \kappa_{1}{}^3\begin{bmatrix}
( V \overset{1,1}{\odot} \xi \overset{1,1}{\odot} \mathcal{K}^0 \mathcal{K}^2\bullet  
-  V  \overset{1,1}{\odot} \xi  \overset{1,1}{\odot} \bullet  )
&
\tfrac{1}{4}  V \overset{1,1}{\odot} \xi \overset{0,0}{\odot} \bullet
\end{bmatrix},\\
\InvSymb_V^{D}\defeq{}&\begin{bmatrix}
\InvSymb_V^{D1}
&\InvSymb_V^{D2}
\end{bmatrix},\\
\InvSymb_V^{D1}\defeq{}&-81V {\overset{1,1}{\odot}}\mathcal{K}^1\sTwist( \kappa_{1}{}^4\mathcal{K}^2\mathcal{K}^2\bullet) - \tfrac{81}{2}\kappa_{1}{}^4(\mathcal{K}^2\sCurl V)_{}\mathcal{K}^2\mathcal{K}^2\bullet - 27\kappa_{1}{}^3V {\overset{1,1}{\odot}}\xi {\overset{1,0}{\odot}}\mathcal{K}^1\mathcal{K}^2\bullet ,\\
\InvSymb_V^{D2}\defeq{}& - \tfrac{81}{2}\bar{\kappa}_{1'}{}^4(\overline{\mathcal{K}}^2\sCurlDagger V)_{}\overline{\mathcal{K}}^2\overline{\mathcal{K}}^2\bullet + 27\bar{\kappa}_{1'}{}^3V {\overset{1,1}{\odot}}\xi {\overset{0,1}{\odot}}\overline{\mathcal{K}}^1\overline{\mathcal{K}}^2\bullet.
\end{align}
\end{subequations}
Applied to a linearized metric, $\vartheta \Phi$ and $\vartheta \Lambda$ are the trace-free and trace parts of linearized Ricci, $\mathcal{P}^{2} \vartheta \Psi$ are the linearized extreme Weyl components (Teukolsky scalars), cf. \eqref{eq:LinCurvOps}. $\mathbb{I}_V$ are two third order complex scalar invariants.

Neither the number of components nor the differential order of the two sets of invariants coincide and we refer to section~\ref{sec:counting} for more details. Next we show that $\widetilde \Kop_1$ and  $\Kop_1$ can be factored through each other, thereby proving the completeness of $\widetilde \Kop_1$.

\subsection{Factorization of \texorpdfstring{$\widetilde \Kop_1$}{tilde K1} through \texorpdfstring{$\Kop_1$}{K1}} \label{sec:C2Comps}
In this subsection we show that $\widetilde \Kop_1= \widetilde \Dop_2 \circ \Kop_1$ for some operator $\widetilde \Dop_2$. Following \cite[Lemma 4]{kh-compat}, assume we have $\widetilde \Kop_1 \circ \Dop_1 = \Fop \circ \Kop_1'$, for some operator $\Fop$, then
\begin{align} \label{eq:PRLThroughK1Factorization}
\widetilde \Kop_1
={}& \widetilde \Kop_1 \circ (\Dop_1 \circ  \Cop_1 + \Hop_1 \circ \Kop_1 + \Kop_0 \circ \Hop_0) \nonumber \\
={}& \Fop \circ \Kop_1' \circ \Cop_1 + \widetilde \Kop_1 \circ \Hop_1 \circ \Kop_1 \nonumber \\
={}& (\Fop \circ \Cop_2 + \widetilde \Kop_1 \circ \Hop_1) \circ \Kop_1
\end{align}
Hence, we can choose $\widetilde \Dop_2 = \Fop \circ \Cop_2 + \widetilde \Kop_1 \circ \Hop_1$. Now we compute $\Fop$. 

\begin{definition}
\begin{subequations} \label{eq:defPRLfactorization}
\begin{align}
\vartheta\Phi^{D} \defeq{}& \begin{bmatrix}
k^1 & \overline{k^1} & k^2 & \overline{k^2}
\end{bmatrix}, \text{ with }  \\
k^i \defeq{}& \tfrac{1}{6}e^{i}{}{\overset{0,0}{\odot}}\sCurl\bullet -  \tfrac{1}{4}e^{i}{}{\overset{0,1}{\odot}}\sTwist\bullet -  \tfrac{1}{8}e^{1}{}{\overset{0,1}{\odot}}\FlatCon^{i}{}_{1}{}{\overset{0,0}{\odot}}\bullet -  \tfrac{1}{8}e^{2}{}{\overset{0,1}{\odot}}\FlatCon^{i}{}_{2}{}{\overset{0,0}{\odot}}\bullet + \tfrac{1}{12}e^{1}{}{\overset{0,0}{\odot}}\FlatCon^{i}{}_{1}{}{\overset{0,1}{\odot}}\bullet + \tfrac{1}{12}e^{2}{}{\overset{0,0}{\odot}}\FlatCon^{i}{}_{2}{}{\overset{0,1}{\odot}} \bullet, \nonumber\\
 \vartheta\Lambda^{D} \defeq{}& \begin{bmatrix}
 r^1 & \overline{r^1} & r^2 & \overline{r^2}
 \end{bmatrix}, \text{ with } \\
  r^i \defeq{}& - \tfrac{1}{24}e^{i}{}{\overset{1,1}{\odot}}\sCurl\bullet -  \tfrac{1}{48}e^{1}{}{\overset{1,1}{\odot}}\FlatCon^{i}{}_{1}{}{\overset{0,1}{\odot}}\bullet -  \tfrac{1}{48}e^{2}{}{\overset{1,1}{\odot}}\FlatCon^{i}{}_{2}{}{\overset{0,1}{\odot}}\bullet. \nonumber
\end{align}
\end{subequations} 
\end{definition}

\begin{lemma}
The operator $\widetilde \Kop_1$ composed with $\Dop_1$ factors through $\Kop_1'$ via $\widetilde \Kop_1\circ\Dop_1=\Fop\circ\Kop_1'$, where
\begin{align}
\Fop={}&\begin{bmatrix}
\vartheta\Phi^{D}\\
\vartheta\Lambda^{D}\\
\mathcal{P}^{2} \vartheta \Psi^{D}\\
\InvSymb_\xi^{D} \linWeyl^{D}\\
\InvSymb_\zeta^{D} \linWeyl^{D}
\end{bmatrix}
\end{align}
\end{lemma}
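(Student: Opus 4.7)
\bigskip

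\noindent\textbf{Proof proposal.} The overall strategy has two parts: an abstract existence argument followed by a row-by-row verification of the explicit form of $\Fop$. For existence, observe that the operator $\widetilde \Kop_1 \circ \Dop_1$ annihilates the image of $\Kop'_0$: using identity \ref{enu:id3} of the previous proof, we have $\widetilde \Kop_1 \circ \Dop_1 \circ \Kop'_0 = \widetilde \Kop_1 \circ \Kop_0 \circ \Dop_0 = 0$, since every component of $\widetilde\Kop_1$ is a gauge invariant for $\Kop_0$ by construction (they are the linearized curvature and the $\InvSymb_V$ invariants of \cite{ab-kerr}). Because $\Kop'_1$ is a complete compatibility operator for $\Kop'_0$ (Proposition~\ref{prp:de-rham}), this forces the existence of some $\Fop$ with $\widetilde \Kop_1 \circ \Dop_1 = \Fop \circ \Kop'_1$. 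It remains to exhibit an $\Fop$ of the claimed form.

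The rows of $\widetilde \Kop_1$ split naturally into two groups. For the invariants $\InvSymb_V$ with $V \in \{\xi,\zeta\}$, I would use the defining decomposition \eqref{eq:IIIvDef1}, namely $\InvSymb_V = \InvSymb_V^{D}\linWeyl + \InvSymb_V^{A}$, together with identity \eqref{eq:WeylPs1id} from Lemma~\ref{lem:OpIds}, which reads $\linWeyl \Pop^s_1 = \linWeyl^A + \linWeyl^D \Kop'_1$. Since $\Dop_1 = \Pop^s_1$, this gives
\begin{equation*}
	\InvSymb_V \circ \Dop_1
	= \InvSymb_V^{D}\linWeyl^D \Kop'_1
	+ \bigl(\InvSymb_V^{D}\linWeyl^A + \InvSymb_V^{A} \Pop^s_1\bigr) .
\end{equation*}
The claim that the $V$-row of $\Fop$ equals $\InvSymb_V^{D}\linWeyl^D$ is therefore reduced to the purely algebraic identity $\InvSymb_V^{D}\linWeyl^A + \InvSymb_V^{A} \Pop^s_1 = 0$, which can be checked directly from the explicit formulas for $\InvSymb_V^{D}$, $\InvSymb_V^{A}$, $\linWeyl^A$, and $\Pop^s_1$; the abstract existence argument above already guarantees that such a cancellation must occur.

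For the linearized curvature rows $\vartheta\Phi$, $\vartheta\Lambda$, and $\mathcal{P}^{2}\vartheta\Psi$, I would argue by a direct differential computation. Expand $\Dop_1 = \Pop^s_1$ applied to a pair $(\mu_1,\mu_2)\in V'_1$ using the frame $(e^1,e^2)$, and apply the explicit linearized curvature formulas \eqref{eq:LinCurvOps}. Because the frame 1-forms $e^1,e^2$ dual to the Killing vectors are not parallel, Leibniz expansion will produce derivatives of $\mu_i$ (grouped into the fundamental spinor operators) together with connection-coefficient terms involving $\FlatCon^i{}_j$. The derivatives of $\mu_i$ must reorganize, up to the antisymmetrizations captured by $\sCurlDagger$ and $\sCurl$ acting on each $\mu_i$, into the twisted de~Rham differential $\Kop'_1$ applied to $(\mu_1,\mu_2)$; the explicit correction coefficients attached to $\FlatCon^i{}_j$ in \eqref{eq:defPRLfactorization} have been designed precisely so that the symmetric and trace parts of the linearized Ricci operator, and the algebraically projected linearized Weyl operator, take the stated form $\vartheta\Phi^D\circ \Kop'_1$, $\vartheta\Lambda^D\circ\Kop'_1$, and $\mathcal{P}^{2}\vartheta\Psi^D\circ\Kop'_1$ respectively. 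The flatness relations \eqref{eq:FlatConnectionConditions} ensure that no further uncancelled algebraic terms survive.

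The main obstacle, in my view, is the $\mathcal{P}^{2}\vartheta\Psi$ row: it involves the second-order operator $\sCurl\sCurl$ followed by the algebraic projector $\mathcal{P}^2$ built from $\mathcal{K}$-operators, and reorganizing the result into a first-order $\mathcal{P}^{2}\vartheta\Psi^D$ acting on $\Kop'_1$ requires nontrivial commutator identities between $\sCurl$, $\mathcal{K}^i$, and the Killing spinor $\kappa_{AB}$, together with careful use of \eqref{eq:Twistkappa} and \eqref{eq:Psi2Covariant} to absorb background curvature factors. The $\InvSymb_V$ algebraic identity, while conceptually simple, is likewise computationally heavy because it mixes $\mathcal{K}$-projections on both primed and unprimed indices, and verifying it term by term is where most of the book-keeping will occur.
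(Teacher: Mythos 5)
Your proposal follows essentially the paper's argument for four of the five rows, and where it differs it is correct but harder than necessary. For $\vartheta\Phi$ and $\vartheta\Lambda$ the paper likewise just commutes the linearized curvature operators through the algebraic $\Dop_1$, and for $\InvSymb_V$ it likewise combines \eqref{eq:IIIvDef1} with \eqref{eq:WeylPs1id}; the only difference there is that the paper asserts the two remainder terms vanish separately, $\InvSymb_V^{A}\Dop_1 = 0$ and $\InvSymb_V^{D}\linWeyl^{A} = 0$, rather than only their sum. The genuine divergence is the $\mathcal{P}^{2}\vartheta\Psi$ row, which you single out as the main obstacle and propose to attack by a fresh second-order computation with $\sCurl\sCurl$ and $\mathcal{K}$-commutators: the paper avoids this entirely by reusing \eqref{eq:WeylPs1id} (whose Weyl row already encodes that computation, done once in Lemma~\ref{lem:OpIds}) together with the purely algebraic fact $\mathcal{P}^{2}\vartheta\Psi^{A} = 0$, which holds because $\vartheta\Psi^{A}$ has the form $\mathcal{K}^0(\cdots)$ and $\mathcal{P}^{2}\mathcal{K}^0 = 0$ (in a principal dyad $\mathcal{K}^0$ inserts $o_{(A}\iota_{B)}$, so its image has no extreme components). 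So the row you expect to be hardest is in fact the cheapest, exactly parallel to your own treatment of $\InvSymb_V$.

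One caution: your parenthetical claim that the abstract existence argument ``already guarantees that such a cancellation must occur'' is not correct as stated. Completeness of $\Kop'_1$ only gives the existence of \emph{some} $\Fop$ with $\widetilde\Kop_1\Dop_1 = \Fop\Kop'_1$; applied to the $V$-row it forces the remainder $\InvSymb_V^{D}\linWeyl^{A} + \InvSymb_V^{A}\Pop^{s}_1$ (a first-order operator) to factor through the first-order $\Kop'_1$, not to vanish. Since the lemma asserts the specific $\Fop$ displayed, the direct verification you promise is genuinely needed and cannot be replaced by the soft argument; this is also why the paper states the vanishing identities explicitly rather than appealing to completeness.
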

\begin{proof}
The relations $\vartheta\Phi \Dop_1 = \vartheta\Phi^{D} \Kop_1'$ and $\vartheta\Lambda \Dop_1 = \vartheta\Lambda^{D} \Kop_1'$ are commutators of the linearized trace-free Ricci operator  and linearized Ricci scalar operator \eqref{eq:LinCurvOps} with the algebraic operator $\Dop_1$. For the Weyl components $\vartheta \Psi$ we use $\mathcal{P}^{2} \vartheta \Psi^A = 0$ so that the factorization $\mathcal{P}^{2} \vartheta \Psi \Dop_1 = \mathcal{P}^{2} \vartheta \Psi^{D} \Kop_1'$ follows from \eqref{eq:WeylPs1id}. 
For $\InvSymb_V$ we find 
\begin{align}
\InvSymb_V^{A} \Dop_1 ={}&\InvSymb_V^{D} \linWeyl^{A} = 0, \qquad  \text{ for } V \in \{\xi,\zeta\},
\end{align}
so that \eqref{eq:IIIvDef1} with \eqref{eq:WeylPs1id} leads to 
$\InvSymb_V \Dop_1= \InvSymb_V^{D} \linWeyl^{D} \Kop'_1$ for $V \in \{\xi,\zeta\}$.
\end{proof}

\subsection{Factorization of \texorpdfstring{$\Kop_1$}{K1} through \texorpdfstring{$\widetilde \Kop_1$}{tilde K1}} \label{sec:C2tildeComps}
In this subsection we show that $\Kop_1 = \widetilde \Cop_2 \circ \widetilde \Kop_1$ for some operator $\widetilde C_2 $. For the second component of $\Kop_1$ we find the relation
\begin{align}
\Pop^1_\top (\id_1 -  \Kop_0 \Hop_0) 
\begin{bmatrix}
G\\
\slashed{G}
\end{bmatrix} 
=
- \frac{2 \bar{\kappa}_{1'}{}^2}{729 \Psi_{2} \kappa_{1}{}}
\begin{bmatrix}
\InvSymb_\xi \\
   \frac{ (\InvSymb_\zeta + \overline{\mathbb{I}_{\zeta}{}} ) }{9\kappa_{1}{}  \bar{\kappa}_{1'} }  +  \frac{(\InvSymb_\xi +\overline{\mathbb{I}_{\xi}{}})(\bar{\kappa}_{1'}{}^2+\kappa_1^2)}{4 \kappa_{1}  \bar{\kappa}_{1'}}\\
 \overline{\mathbb{I}_{\xi}{}}
\end{bmatrix}.
\end{align}

The expansion of the first component of $\Kop_1$ in terms of $\widetilde{\Kop}_1$ required a long computation and the result is displayed in appendix~\ref{sec:AppC2tildeComps}.

\section{Counting invariants} \label{sec:counting} 

The conclusion from Section~\ref{sec:equiv} is that the components of both
operators $\Kop_1$ (computed using the methods of~\cite{kh-compat}) and
$\tilde{\Kop}_1$ in \eqref{eq:PRLinv} each constitute a
complete set of local first order gauge invariants for metric
perturbations of Kerr. Yet the two operators look quite different:
$\Kop_1$ is of 4th differential order%
	\footnote{ The linearized Weyl operator composed with the Killing operator, $\linWeyl \Kop_0$ reduces (in any geometry) from differential order 3 to 1 by using commutators. Therefore $\Kop_1$  as defined in \eqref{eq:1aops} reduces from order 6 to 4.} %
and has 15 real components, while
$\tilde{\Kop}_1$ is of 3rd differential order and has 18 real
components. So neither the degree nor the number of components is a
stable quantity for a complete set of invariants under the kind of
equivalence considered in Sections~\ref{sec:compat} and~\ref{sec:equiv}.
A natural question to ask is the following: is there any stable way to
assign either an order or a number of components to a complete set of
invariants, perhaps under some condition of minimality?

Practically speaking, the higher order of $\Kop_1$ is what allows it to
get away with fewer components than $\tilde{\Kop}_1$. Taking
differential linear combinations of the high order components of
$\Kop_1$ it is possible to cancel the highest order
coefficients, leaving behind the extra lower order components that are
present in $\tilde{\Kop}_1$ but not in $\Kop_1$. It is also easy to see
how, even without changing the number of components, the order of either
$\Kop_1$ or $\tilde{\Kop}_1$ could be artificially inflated by mixing a
high order derivative of one component with another, in an invertible
way. The ambiguity in the order and in the number of components
lies in the subtle interplay between the leading and sub-leading order
coefficients in the gauge invariants.

This issue is very well known in the literature on overdetermined PDEs,
where a set of relevant homological invariants has been identified,
so-called \emph{Spencer cohomologies} of a differential
operator~\cite{spencer, goldschmidt-lin, seiler-inv}. These invariants
are stable under the kind of equivalence considered in
Sections~\ref{sec:compat} and~\ref{sec:equiv}, and the dimensions of certain Spencer cohomologies can
be combined to give the order and the number of components of the
differential operator, provided it is in so-called \emph{involutive} and
\emph{minimal} form. Minimality is a simple condition on the principal
symbol, while involutivity is a more sophisticated condition involving
both the principal and sub-principal parts of the operator.

The classical principal symbol $\sigma_p(\Kop)$ of an operator $\Kop$
(the coefficients of highest differential order, contracted with
covectors $p\in T^*M$ instead of partial derivatives) is most useful
when every component of $\Kop$ has the same order. For operators of
mixed orders, it is more useful to work with the \emph{graded
symbol}~\cite{kl-review, seiler-inv}, which is essentially the same as
the \emph{weighted symbol} of~\cite{douglis-nirenberg}. Using the same
notation, the graded symbol $\sigma_p(\Kop)$ is a matrix of homogeneous
polynomials in the covector $p\in T^*M$, possibly of different orders
but obeying some simple bounds, and it satisfies the convenient identity
\begin{equation} \label{eq:gradsymb-comp}
	\sigma_p(\Lop \Kop) = \sigma_p(\Lop) \sigma_p(\Kop) ,
\end{equation}
even if the right-hand side is zero. The operator $\Kop$ is
\emph{minimal}%
	\footnote{This usage is compatible with the notion of a \emph{minimal
	resolution} in commutative algebra~\cite{seiler-inv}.} %
when the rows of its symbol $\sigma_p(\Kop)$ are linearly
independent with respect to $p$-independent coefficients. And $\Kop$ is
\emph{involutive}%
	\footnote{The traditional definition of the notion of \emph{involutivity} of a
	differential operator or a PDE is rather technical. Ours is simplified
	and synthesized from the relation between traditional involutivity and
	commutative algebra elaborated in~\cite{kl-review, seiler-inv}.}
when any admissible matrix of homogeneous polynomials
$\lambda_p$ satisfying $\lambda_p \sigma_p(\Kop)=0$ can be extended to a
differential operator $\Lop$ (that is, $\sigma_p(\Lop) = \lambda_p$)
such that $\Lop\Kop = 0$.

Minimality is easy to check, as it follows from maximality of numerical
rank when $\sigma_p(\Kop)$ is evaluated on a generic value of $p$. On
the other hand, in general, it is much easier to show that $\Kop$ is not
involutive (meaning that its degree or number of components has no
invariant meaning) than the opposite. As an illustration, let us consider the
involutivity of the Killing operator $\Kop_0$ on Kerr.
Due to the fact that $\vartheta \Psi_2$ is gauge invariant in the Minkowski case, we have 
\begin{align} 
\sigma_p (\vartheta \Psi_2)  \sigma_p (\Kop_0)  = 0.  
\end{align} 
However, it can be shown that, on Kerr, there is no second order operator $\Lop$ with symbol  $\sigma_p(\vartheta\Psi_2)$ such that $\Lop  \Kop_0 = 0$, and hence $\Kop_0$ fails to be involutive.
It was rightly noted in~\cite{pommaret1, pommaret2} that constructing a full compatibility operator for an involutive version of $\Kop_0$ is much easier. However, we point out that our $\Kop_0$ is tied to the fixed notion of gauge symmetry and gauge invariance in linearized gravity, hence we are not free to replace it with its involutive prolongation.   
Preliminary analysis also indicates that neither $\Kop_1$ nor $\widetilde\Kop_1$ is involutive. We do suspect that enlarging
$\tilde{\Kop}_1$ by $\ImA_3$ and $\ImA_4$ defined in Appendix~\ref{sec:AppC2tildeComps} is sufficient to achieve involutivity. However, a full analysis of involutivity goes beyond the scope of this paper.

\section{Discussion} \label{sec:discussion} 
In this paper we have given for the first time a proof of completeness
for a set of gauge invariants for first order metric perturbations of the Kerr spacetime, 
where we have interpreted gauge invariants as compatibility operators for the Killing operator $\Kop$ on
this background. In section~\ref{sec:kerr}, we have constructed an
operator $\Kop_1$ following the methods of~\cite{kh-compat}, which
guarantee that the components of $\Kop_1$ are a complete set of gauge
invariants, even if their explicit expressions turn out to be somewhat
cumbersome. In section~\ref{sec:equiv}, we have shown that the operators
$\Kop_1$ and $\widetilde{\Kop}_1$ factor through each other, where the
$\widetilde{\Kop}_1$ consist of the concise set of gauge invariants introduced
in~\cite{ab-kerr}, thus confirming the completeness of the components of
$\widetilde{\Kop}_1$ that was stated in~\cite{ab-kerr}. With little
extra effort, the construction of section~\ref{sec:kerr} also yielded a
full compatibility complex $\Kop_l$ for $\Kop_0 = \Kop$, terminating
after $l=0,1,2,3$.

There exists a non-linear analog of the problem of constructing a
complete set of linear gauge-invariants on a given background spacetime
$(M,g_{ab})$. Namely, a so-called \emph{IDEAL
characterization}~\cite{coll-ferrando, fs-schw, fs-sphsym, fs-kerr}
of a given spacetime consists of a list of tensors $\{T_i[g]\}$
covariantly built from the metric, Riemann tensor and covariant
derivatives such that the conditions $T_i[g]=0$ are sufficient to
guarantee that $(M,g_{ab})$ is locally isometric to the given reference
spacetime. As was pointed out in the recent works~\cite{cdk,
kh-tangherlini}, where IDEAL characterizations were given for
cosmological FLRW and Schwarzschild-Tangherlini black hole spacetimes,
one can use the tensors $\{T_i[g]\}$ to construct linear gauge
invariants on the characterized spacetime. In particular, the identity
$\Lie_v T_i[g] = \dot{T}^g_i[\Kop^g[v]]$~\cite{stewart-walker}
guarantees that the linear operator $\dot{T}^g_i[h]$ is a gauge
invariant whenever $T_i[g] = 0$ (or even more generally when $T_i[g]$ is
a combination of Kronecker-deltas with constant coefficients).
Conversely, any linear metric perturbation $h_{ab}$ that leaves the
local isometry class must violate the IDEAL characterization equalities,
$\{T_i[g+h+\cdots]\} \ne 0$, which is equivalent to $\{\dot{T}^g_i[h]\}
\ne 0$, unless some $T_i$ vanish to a high order on the space of all
metrics along some directions approaching the reference metric $g$.
Thus, the operators $\{\dot{T}^g_i\}$ have a good geometric
interpretation and provide a good candidate for a complete set of linear
gauge invariants on the reference spacetime geometry. Checking their
equivalence with a systematically constructed complete set of linear
gauge invariants can accomplish a double goal: provide the complete
gauge invariants with a geometric interpretation, and show that the set
$\{\dot{T}^g_i\}$ is indeed complete. Such an exercise has already been
successfully carried out for a class of FLRW geometries~\cite{fhk}. It
would be worth while to complete the comparison, already initiated
in~\cite{ab-kerr}, of the $\widetilde{\Kop}_1$ operator with the linearized
IDEAL characterization of the Kerr spacetime given by Ferrando and
S\'aez~\cite{fs-kerr}.

It is well-known that the construction of Hodge-de~Rham Laplacians on a
Riemannian manifold uses in an essential way the structure of the
de~Rham complex as a compatibility complex. Similarly, it was observed
in~\cite{kh-calabi} that the compatibility complex $\Kop_l$ on a
maximally symmetric Lorentzian spacetime can be endowed with a
Hodge-like structure, producing a sequence of wave-like (normally
hyperbolic~\cite{bgp}) operators $\square_l$, obeying the commutativity
relations $\Kop_l \square_l = \square_{l+1} \Kop_l$. The $\square_l$
operators have several applications: (a) Providing a ``Hodge theory''
for the cohomology of the compatibility complex
$H^*(\Kop_l)$~\cite{kh-causcohom}. (b) Providing a propagation equation
$\square_1 \Kop_1[h]$ directly for the gauge invariants of perturbations
$h$ satisfying the linearized Einstein equations. (c) Providing a
reconstruction of the metric perturbation $h$ from its invariants $\psi
= \Kop_1[h]$, formally $h = \square_0^{-1}\delta_1[\psi] =
\delta_1[\square_1^{-1} \psi]$, where $\square_l = \delta_{l+1}
\Kop_{l+1} + \Kop_l \delta_l$. Alternatively, the metric reconstruction
problem could be locally reduced to an application of the Poincar\'e
lemma to the $\Kop'_l$ complex. It would be interesting to identify such
a ``Hodge-like structure'' also for our $\Kop_l$ compatibility complex
on Kerr.

In section~\ref{sec:counting}, we have discussed the notion of
involutivity and minimality for a differential operator. 
Although it appears that  $\Kop_0, \Kop_1, \widetilde{\Kop}_1$ fail to be involutive, it would be interesting
to find an involutive and minimal version of the full compatibility
complex $\Kop_l$, for $l\ge 1$, for instance by completing
$\widetilde{\Kop}_1$ to involutivity as suggested at the end of
section~\ref{sec:counting} and lifting the rest of the $\Kop_l$
operators in an involutive way. Working with an involutive compatibility
complex can simplify the search for a ``Hodge-like structure'' mentioned
above. In the absence of involution, the differential orders of the
operators $\delta_l$ are not a priori bounded from the known orders of
the $\Kop_l$ and the expected orders of the $\square_l$ operators.

Although the Schwarzschild spacetimes are part of the Kerr family, the
fact that they have a larger number of independent Killing vectors means
that some of the discussion from this paper and the earlier
paper~\cite{ab-kerr} do not apply to them, so they need to be handled as
special cases. In fact, the analogous construction of the compatibility
complex for Schwarzschild spacetimes was already carried out
in~\cite[Sec.3.3]{kh-compat}. Also, in analogy with~\cite{ab-kerr} for
Kerr, a number of convenient sets of linear gauge invariants for
Schwarzschild were given in~\cite{swaab}. It would be interesting to
check whether any of these sets is complete by comparing them to the
complete set of gauge invariants obtained in~\cite[Sec.3.3]{kh-compat}.

Finally, another application of the methods used in this paper would be
the construction of a complete set of linear gauge invariants, as well
as a corresponding full compatibility complex, for the Kerr-Newman
charged rotating black hole spacetime. In the Kerr-Newman case, the
compatibility complex must start with a more complicated operator
$\Kop_0$ that incorporates both the linearized diffeomorphisms and the
electromagnetic gauge transformations.

\subsection*{Acknowledgements}
This work was completed while the authors were in residence at Institut Mittag-Leffler in Djursholm, Sweden during the fall of 2019, supported by the Swedish Research Council under grant no. 2016-06596.
IK was partially supported by the Praemium Academiae of M.~Markl,
GA\v{C}R project 18-07776S and RVO: 67985840. The work of BFW was supported in part by NSF Grants PHY 1314529 and PHY 1607323 to the University of Florida.  Support from the Institut d'Astrphysique de Paris (IAP), where part of this work was carried out, is also acknowledged, as is support at Astroparticule et Cosmologie (APC) from the French state funds managed by the ANR within the Investissements d’Avenir programme under Grants ANR-11-IDEX-0004-02 and ANR-14-CE03-0014-01-E-GRAAL. The authors are grateful to Prof. J.-F. Pommaret for several helpful discussions.
\appendix

\section{Component form of \texorpdfstring{$\widetilde \Cop_2$}{tilde C2}} \label{sec:AppC2tildeComps}
For the frame \eqref{eq:frame} define the connection
\begin{align}
\FrameCon^{i}{}_{jk}\defeq{}&e^{a}{}_{j} e_{b}{}^{i} \nabla_{a}e^{b}{}_{k}.
\end{align}
Define the spinors $\hat{\mathcal{A}}\in \SymSpin_ {1,1}$, $\slashed{P}\in \SymSpin_ {0,0}$, $P\in \SymSpin_ {2,2}$ and $Q\in \SymSpin_ {3,1}$ via
\begin{subequations}
\begin{align}
\hat{\mathcal{A}}\defeq{}&\mathcal{K}^1\sTwist \kappa_{1}{}^4\mathcal{K}^2\mathcal{K}^2\vartheta \Psi
 -  \tfrac{1}{4}\Psi_{2}\kappa_{1}{}^3\xi {\overset{0,0}{\odot}}\slashed{G}
 + \tfrac{2}{3}\kappa_{1}{}^3\xi {\overset{1,0}{\odot}}\mathcal{K}^1\mathcal{K}^2\vartheta \Psi
 -  \tfrac{1}{2}\Psi_{2}\kappa_{1}{}^3\xi {\overset{1,1}{\odot}}\mathcal{K}^0\mathcal{K}^2G,\\
\slashed{P}\defeq{}&\sDiv \hat{\mathcal{A}}
 + \tfrac{1}{2}\kappa_{1}{}^3\Psi_{2}\mathcal{L}_{\xi}\slashed{G},\\
P\defeq{}&\tfrac{1}{2}\Psi_{2}\kappa_{1}{}^3\mathcal{L}_{\xi}G
 + \sTwist \hat{\mathcal{A}},\\
Q\defeq{}&\sCurlDagger \mathcal{P}^{2}\vartheta \Psi
 + \tfrac{5}{6}\mathcal{K}^0\mathcal{K}^2\sCurl \vartheta \Phi
 -  \mathcal{K}^1\mathcal{K}^1\sCurl \vartheta \Phi .
\end{align}
\end{subequations}
Also define tensor versions of $P$, $Q$ and $\kappa$ via
\begin{subequations}
\begin{align}
P_{ab}\defeq{}&\tfrac{1}{4} g_{ab} \slashed{P}
 + P_{ABA'B'} \sigma_{a}{}^{AA'} \sigma_{b}{}^{BB'},\\
Q_{abc}\defeq{}&- \sigma^{AA'}{}_{c} \sigma^{BB'}{}_{a} \sigma^{C}{}_{B'b} Q_{ABCA'} \kappa_{1}{}^3,\\
Y_{ab}\defeq{}&\sigma^{AA'}{}_{a} \sigma^{BB'}{}_{b} \bar\epsilon_{A'B'} \kappa_{AB} \bar{\kappa}_{1'}{},
\end{align}
\end{subequations}
where $\sigma_{a}{}^{AA'}$ is the soldering form.
Due to equations (56) and (58b) in \cite{2016arXiv160106084A} we get the relations
\begin{subequations}
\begin{align}
P={}&\sCurlDagger (\kappa_{1}{}^4\mathcal{K}^1Q)
 + 3 \Psi_{2} \kappa_{1}{}^4\mathcal{K}^1\vartheta \Phi
 -  \tfrac{2}{3}\sTwist (\kappa_{1}{}^4\mathcal{K}^2\sCurl \vartheta \Phi) ,\\
\slashed{P}={}&- \tfrac{4}{3} \kappa_{1}{}^3\mathcal{K}^2\mathcal{K}^2\xi {\overset{0,1}{\odot}}\sCurl \vartheta \Phi.
\end{align}
\end{subequations}
The definition of $\hat{\mathcal{A}}$ only differs by a $\vartheta \Phi$ term compared to $\mathcal{A}$ in \cite{2016arXiv160106084A}. From an argument in that paper one finds that $\ImA$ is gauge invariant.
The components of $\ImA$ can be expressed algebraically in terms of $\InvSymb_\xi$, $\InvSymb_\zeta$ and $P$ via
\begin{subequations}
\begin{align}
\Im \InvSymb_\xi={}&-81 \ImA_{1},\quad
\Im \InvSymb_\zeta={}-81 \ImA_{2},\\
P_{11} -  \bar{P}_{11}={}&\frac{i M \ImA_{4}}{3 \kappa_{1}{} \bar{\kappa}_{1'}{}^3}
 + \frac{i M \ImA_{3}}{3 \kappa_{1}{}^3 \bar{\kappa}_{1'}{}}
 -  \tfrac{2}{81}i \partial_{1}\Im \InvSymb_\xi,\\
P_{12} -  \bar{P}_{12}={}&- \frac{3i M \ImA_{4} (\kappa_{1}{}^3 + \kappa_{1}{} \bar{\kappa}_{1'}{}^2 - 2 \bar{\kappa}_{1'}{}^3)}{4 \kappa_{1}{}^2 \bar{\kappa}_{1'}{}^3}
 -  \frac{3i M \ImA_{3} (-2 \kappa_{1}{}^3 + \kappa_{1}{}^2 \bar{\kappa}_{1'}{} + \bar{\kappa}_{1'}{}^3)}{4 \kappa_{1}{}^3 \bar{\kappa}_{1'}{}^2}\nonumber\\
& -  \tfrac{1}{81}i \partial_{1}\Im \InvSymb_\zeta
 -  \tfrac{1}{81}i \partial_{2}\Im \InvSymb_\xi.
\end{align}
\end{subequations}
Hence, we can conclude that any component of $\ImA$, $P$ or $Q$ can be expressed in terms of $\widetilde\Kop_1$.

For symmetric 2-spinors $\phi, \psi$, set
\begin{align}
\begin{bmatrix} \bar{\phi} \\ \phi \\ \bar{\psi} \\ \psi \end{bmatrix}  = \Kop'_1 \Cop_1
\end{align}
and define the real 2-forms
\begin{subequations}
\begin{align}
G_{ab}\defeq{}&\sigma_{a}{}^{AA'} \sigma_{b}{}^{BB'} (\bar\epsilon_{A'B'} \phi_{AB} + \epsilon_{AB} \bar{\phi}_{A'B'}) \\
H_{ab}\defeq{}&\sigma_{a}{}^{AA'} \sigma_{b}{}^{BB'} (\bar\epsilon_{A'B'} \psi_{AB} + \epsilon_{AB} \bar{\psi}_{A'B'}).
\end{align}
\end{subequations} 
A long computer algebra calculation reveals that this operator factors through $\widetilde{\Kop}_1$ with components \eqref{eq:PRLinv} via
\begin{subequations} 
\begin{align}
M G_{12}={}&
 - 54 (P_{12} + \bar{P}_{12})
 -  \tfrac{2}{3} \partial_{1}\Re \InvSymb_\zeta
 -  \tfrac{2}{3} \partial_{2}\Re \InvSymb_\xi
 +108i \FrameCon^{3}{}_{12} \ImA_{3}
 - 108i \FrameCon^{4}{}_{12} \ImA_{4} ,\\
M G_{13}={}&\tfrac{2}{9} Y_{1}{}^{a} (
\partial_{a}\InvSymb_\xi + 162 P_{1a} 
-2 \FrameCon^{2}{}_{1a} \InvSymb_\zeta - 2 \FrameCon^{1}{}_{1a} \InvSymb_\xi ),\\
M G_{23}={}&
+ \tfrac{2}{9} Y_{1}{}^{a} \bigl(162 P_{2a}
 + \partial_{a}\InvSymb_\zeta
-2i \FrameCon^{2}{}_{2a} \Im \InvSymb_\zeta - 2i \FrameCon^{1}{}_{2a} \Im \InvSymb_\xi  - 54 (Q_{12a} + \bar{Q}_{12a})\bigr)\nonumber\\
&- \tfrac{4}{9} Y_{2}{}^{a}(\FrameCon^{2}{}_{1a} \Re \InvSymb_\zeta + \FrameCon^{1}{}_{1a} \Re \InvSymb_\xi) 
 -  \tfrac{4}{9}i \Im \InvSymb_\xi (\FrameCon^{3}{}_{2a} Y_{3}{}^{a} -  \FrameCon^{4}{}_{2a} Y_{4}{}^{a}),\\
M G_{34}={}&- 36i Y_{4}{}^{a} \partial_{1}\ImA_{a}
 +108i (\FrameCon^{3}{}_{34} - 3 \FrameCon^{4}{}_{33}) \ImA_{3}
 - 108i (3 \FrameCon^{3}{}_{44} -  \FrameCon^{4}{}_{34}) \ImA_{4}\nonumber\\
& - 36 (Q_{341} + \bar{Q}_{341}),\\
M H_{12}={}& - 27 (P_{22} + \bar{P}_{22})
 -  \tfrac{2}{3} \partial_{2}\Re \InvSymb_\zeta
+ 54i \FrameCon^{3}{}_{22} \ImA_{3}
 - 54i \FrameCon^{4}{}_{22} \ImA_{4},\\
M H_{13}={}&
  \tfrac{2}{9} Y_{2}{}^{a} \bigl(162 P_{1a}+ \partial_{a}\InvSymb_\xi
 -i (\FrameCon^{1}{}_{1a} + \FrameCon^{2}{}_{2a}) \Im \InvSymb_\xi  - 27 Q_{1a1} \bigr)
 -18 Q_{231}\nonumber\\
& - 18 (Q_{123} + \bar{Q}_{123})
 -  \tfrac{2}{9} \bigl(2 \FrameCon^{2}{}_{2a} \InvSymb_\zeta + 2 \FrameCon^{1}{}_{2a} \InvSymb_\xi - i (\FrameCon^{1}{}_{1a} + \FrameCon^{2}{}_{2a}) \Im \InvSymb_\zeta\bigr) Y_{1}{}^{a},\\
M H_{23}={}&
 -  \tfrac{2}{9} Y_{2}{}^{a} \bigl( - 162 P_{2a} -  \partial_{a}\InvSymb_\zeta
 +\FrameCon^{2}{}_{2a} (2 \InvSymb_\zeta - i \Im \InvSymb_\zeta) + i \FrameCon^{1}{}_{1a} \Im \InvSymb_\zeta + \FrameCon^{1}{}_{2a} (2 \InvSymb_\xi + i \Im \InvSymb_\xi) + 27 Q_{1a2}\nonumber\\
 &\quad  + 27 (Q_{12a} + \bar{Q}_{12a}) \bigr)
 - \tfrac{4}{3}i \FrameCon^{1}{}_{23} \Im \InvSymb_\zeta
 - 18 Q_{232}
 -  \tfrac{2}{9}i \Im \InvSymb_\xi (\FrameCon^{3}{}_{22} Y_{3}{}^{1} -  \FrameCon^{4}{}_{22} Y_{4}{}^{1}),\\
M H_{34}={}&
  18 \bar{Y}_{2}{}^{a}\bar{P}_{a3} 
 - 18 Y_{2}{}^{a}P_{a4}  
 -  \tfrac{2}{9}i Y_{4}{}^{1} \partial_{1}\Im \InvSymb_\zeta
 + \tfrac{2}{9}i Y_{4}{}^{1} \partial_{2}\Im \InvSymb_\xi
 \nonumber\\
&+ 36i \ImA_{3} (\FrameCon^{1}{}_{2a} \bar{Y}_{1}{}^{a} -  \FrameCon^{1}{}_{1a} \bar{Y}_{2}{}^{a} + \FrameCon^{4}{}_{2a} \bar{Y}_{4}{}^{a})
  + 36i \ImA_{4} (\FrameCon^{1}{}_{2a} Y_{1}{}^{a} -  \FrameCon^{1}{}_{1a} Y_{2}{}^{a} + \FrameCon^{3}{}_{2a} Y_{3}{}^{a})
\nonumber\\
& 
 -9 Q_{234}
 + 27 Q_{243}
 - 27 Q_{342}
 - 9 \bar{Q}_{234}
 - 9 \bar{Q}_{243}
 - 9 \bar{Q}_{342}
 - 81 Q_{144} \kappa_{1}{} \bar{\kappa}_{1'}{}
 + 81 \bar{Q}_{133} \kappa_{1}{} \bar{\kappa}_{1'}{}\nonumber\\
  &+ \frac{81}{8 \kappa_{1}{} \bar{\kappa}_{1'}{}}(\bar{Q}_{144}-  Q_{133}) (\kappa_{1}{}^2 + \bar{\kappa}_{1'}{}^2)^2
    + (-81 Q_{134} + \tfrac{567}{4} Q_{143} - 81 Q_{341} 
  - 81 \bar{Q}_{134}  \nonumber\\
  &\quad + \tfrac{81}{4} \bar{Q}_{143}-  \tfrac{81}{4} \bar{Q}_{341} -  \frac{9 Q_{233}}{2 \kappa_{1}{} \bar{\kappa}_{1'}{}} + \frac{9 \bar{Q}_{244}}{2 \kappa_{1}{} \bar{\kappa}_{1'}{}}) (\kappa_{1}{}^2 + \bar{\kappa}_{1'}{}^2).
\end{align}
\end{subequations}
Here $M = 27 \Psi_{2} \kappa_{1}{}^3$ is the mass parameter of the Kerr solution.


%

\bigskip

\end{document}